\newtheorem{theorem}{Theorem}[section]
\newtheorem{proposition}[theorem]{Proposition}
\newtheorem{lemma}[theorem]{Lemma}
\newtheorem{corollary}[theorem]{Corollary}
\theoremstyle{definition}
\newtheorem{example}[theorem]{Example}
\newtheorem{remark}[theorem]{Remark}
\begin{document}


\title{Poisson
structures for lifts and periodic reductions of integrable lattice
equations}
\author{Theodoros E. Kouloukas  and Dinh T. Tran
\\
\\Department of Mathematics and Statistics,
\\
La Trobe University, Bundoora VIC 3086, Australia\\
Email: T.Kouloukas@latrobe.edu.au, Dinh.tran@latrobe.edu.au}
\maketitle

\begin{abstract}
We introduce and study suitable Poisson structures for four
dimensional maps derived as lifts and specific periodic reductions
of integrable lattice equations. These maps are Poisson with respect
to these structures and the corresponding integrals are in
involution.

\end{abstract}

\section{Introduction}

Multidimensional consistency (or compatibility) plays an essential
role in the study of partial difference equations on quadrilateral
lattices and can be considered as a criterion related to integrability. In
two dimensions, 3-dimensional (3D) consistency
provides zero curvature representations as well as an effective way
of classifying certain classes of equations \cite{ABS1}. Further
developments in this direction include B{\"a}cklund transformations
of continuous systems, discrete Lagrangian formalism, connection with
Yang-Baxter maps etc.

3D consistent equations give rise to mappings on the lattices by
considering well-posed periodic initial value problems (see e.g.
\cite{Kamp}). In this paper we will focus only on the $(2,2)$
staircase periodic reductions of quad-graph equations \cite{PNC,QCPN} with one field
on each vertex and on the $(1,1)$ periodic reductions of a specific
kind of systems with two fields on the vertices of any elementary
quadrilateral. In this way we always derive four dimensional maps.
From another point of view, 3D consistent equations can be lifted to
four dimensional Yang-Baxter (YB) maps as described in \cite{paptonlift}.
In both cases (periodic reductions and lifts) the corresponding maps
preserve the spectrum of their monodromy matrix. So, from this
spectrum first integrals are obtained. In order to study the
integrability (in the Arnold-Liouville sense) of these maps, one has
to consider a suitable Poisson structure such that the maps are
Poisson and the corresponding integrals in involution. The main
purpose of this paper is to introduce such structures that fulfil
these properties.

We present two different ways of obtaining Poisson structures for
the lifts and the periodic reductions in question. The first one is
from the Sklyanin bracket \cite{skly1}, and it is based on the Lax representation
of the initial equation, while the second one is related to the
existence of a three-leg form \cite{BS}. It seems that we cannot use both ways
in all cases, but in the cases that we can do this the two Poisson
structures coincide.

The special form of the Lax representations of the lifts as YB maps
leads us to an inverse procedure. Instead of starting with a
quad-graph equation and lifting it to a YB map, we can start with a YB
map of a specific form, that admits a Lax pair, and squeeze it down to
an equation (or a system of equations). The latter equation will have
the same Lax pair. We apply this method to some known YB maps
\cite{kp3} and in this way we derive multiparametric versions of two
well-known integrable lattice equations, the cross-ratio and the
lattice nonlinear Schr{\"o}dinger (NLS) system.

The paper is organized as follows. We begin in section 2 by giving
the necessary definitions of 3D consistent equations, YB maps, lifts
and periodic reductions. We also present the multiparametric
cross-ratio equation and its Lax pair. In sections 3 and 4 we study
Poisson structures derived from the Sklyanin bracket and from
three-leg forms respectively. Under some conditions, suitable Poisson
structures for the lifts of quad-graph equations give rise to
suitable Poisson structures for the corresponding $(2,2)$ periodic
reductions and vice versa. In section 5, we consider more general
cases related to a specific kind of systems on quadrilaterals. We
apply our results to the multiparametric NLS system and we derive
the Poisson structure for the lift and $(1,1)$ periodic reduction.
In section 6, we talk about the integrability of the presented maps
and we conclude in section 7 by giving some comments and
perspectives for future work.


\section{Integrable lattice equations and Yang-Baxter maps}

We review some general facts about 3D consistent lattice equations, YB maps and their Lax representations.  3D consistent quadrilateral equations can be lifted to YB maps and YB maps of specific form can be squeezed down to quad-graph equations with the same Lax pair.  

\subsection{3D consistent quad-graph equations and periodic reductions}

We consider equations on quadrilaterals of the type
\begin{equation} \label{eqQ}
Q(w,w_1,w_2,w_{12};\alpha,\beta)=0,
\end{equation}
that can be uniquely solved for any one of their arguments
$w,w_1,w_2,w_{12} \in \mathbb{C}$.

\

\begin{figure}
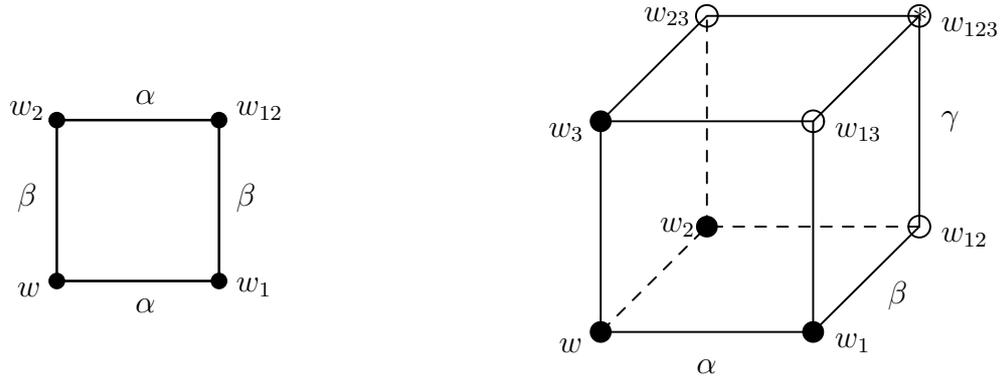

\begin{subfigure}{.5\textwidth}
\centertexdraw{\setunitscale 0.42 \linewd 0.03 \arrowheadtype t:F
\move (-1 -1) \lvec (-1 1) \lvec (1 1) \lvec (1. -1) \lvec(-1 -1)
\move(-1 -1) \fcir f:0.0 r:0.1 \move(-1 1) \fcir f:0.0 r:0.1 \move(1
1) \fcir f:0.0 r:0.1 \move(1 -1) \fcir f:0.0 r:0.1 \htext (-1.5
-1.2) {$w$} \htext (-1.6 1.) {$w_2$} \htext (1.2 1) {$w_{12}$}
\htext (1.2 -1.2) {$w_{1}$} \htext (-0.05 1.2) {$\alpha$} \htext
(-0.05 -1.4) {$\alpha$} \htext (-1.5 -0.15) {$\beta$} \htext (1.2
-0.15) {$\beta$} } \end{subfigure}
\begin{subfigure}{.5\textwidth}
\begin{center}
\centertexdraw{ \setunitscale 1.1 \linewd 0.01

\move (0 0) \linewd 0.01 \lpatt(0.05 0.05) \lvec(0.0 1.0) \lpatt()
\lvec (1.0 1.0) \linewd 0.01 \lvec (1.0 0.0) \lpatt(0.05 0.05)
\lvec(0 0) \lvec(-0.5 -0.5) \lpatt()  \lvec(-0.5 0.5)  \linewd 0.01
\lvec(0 1.0) \move (1. 0)  \linewd 0.01 \lvec(0.5 -0.5) \lvec(-0.5
-0.5) \move(1.0 1.0) \lvec(0.5 0.5) \lvec(-0.5 0.5) \move(0.5 0.5)
\lvec(0.5 -0.5)

\move(0 0) \fcir f:0.0 r:0.05 \move(-0.5 0.5) \fcir f:0.0 r:0.05
\move(0.5 -0.5) \fcir f:0.0 r:0.05 \move(-0.5 -0.5) \fcir f:0.0
r:0.05 \move(1 1) \lcir r:0.05 \move(0.5 0.5) \lcir r:0.05 \move(0
1) \lcir r:0.05 \move(1 0) \lcir r:0.05

\htext (-0.3 0.95) {$w_{23}$} \htext (-0.225 -0.05) {$w_2$} \htext
(1.1 0.9) {$w_{123}$} \htext (-0.75 0.4) {$w_3$}
\htext (0.6 -0.6) {$w_{1}$} \htext (1.1 -0.1) {$w_{12}$} \htext
(-0.7 -0.6) {$w$} \htext (0.6 0.4) {$w_{13}$} \htext (0.85 -0.4)
{$\beta$} \htext (-0.05 -0.7) {$\alpha$} \htext (1.1 0.45)
{$\gamma$} \htext (0.965 0.92){*} }
\end{center}
\end{subfigure}
\caption{Equation (\ref{eqQ}) at the vertices of a quadrilateral and consistency around the cube \label{figcube}}
\end{figure}

By considering the initial values (black points) at the vertices of a cube as in the
Figure \ref{figcube}, 
we can determine the value $w_{123}$ in three different ways. If
all the three values coincide then we call the equation (\ref{eqQ})
{\em{3-dimensional consistent}}.

Adler, Bobenko and Suris (ABS) have classified all the 3D consistent equations on $\mathbb{C}$ with some
extra properties into a list  \cite{ABS1}, commonly referred as the ABS list.

The 3D consistency of a quadrilateral equation gives rise to Lax representations \cite{BS,NW}, i.e. an equation
\begin{equation} \label{laxquad}
L(w_2,w_{12},\alpha)L(w,w_2,\beta)=L(w_1,w_{12},\beta)L(w,w_1,\alpha)
\end{equation}
for some matrix $L$, equivalent to (\ref{eqQ}). The matrix $L$ is called a {\em{Lax matrix}} of equation (\ref{eqQ}).
In a more general setting, quad-graph equations are related to a Lax pair $L$, $M$ that gives rise to a Lax representation 
of the form $ L(w_2,w_{12},\alpha)M(w,w_2,\beta)=M(w_1,w_{12},\beta)L(w,w_1,\alpha)$. In all the cases that we deal with in this paper $L=M$.

Multidimensional maps on quadrilateral lattices are derived from  
3D consistent equations 
by considering well-defined initial value problems. In this paper we will restrict to low
dimensional maps derived by the so called {\em {staircase periodic
initial value problem}} \cite{KQs, PNC, QCPN}. We consider initial values at lattice
points $x_{l,l}$ and $x_{l+1,l}$ which satisfy the periodicity
$x_{l,m}=x_{l+n,m+n}$. By solving the corresponding quad-graph
equation (\ref{eqQ}) at each elementary square of the lattice with
respect to $x_{l,l+1}$, we derive an $n$-dimensional map that maps
the points $x_{l+1,l}$ to the points $x_{l,l+1}$. We will refer to
these maps as $(n,n)$ {\em{staircase periodic reductions}} of the
initial quadrilateral equation.

For each quad-graph equation (\ref{eqQ}), we can define a function  $F:\mathbb{C}^3 \rightarrow \mathbb{C}$, such that (\ref{eqQ})
is equivalent with 
 \begin{equation} \label{ffunction}
w_{2}=F(w_1,w_,w_{12},\alpha,\beta).
\end{equation}
In this way, the map obtained by the $(2,2)$ periodic reduction (Figure \ref{F:two_two_reduction}) can be expressed as 

\begin{equation}
\label{E:reduction}
\mathcal{S}_{\alpha,\beta}:(x_1,x_2,x_3,x_4)\mapsto(x_1^{'}, x_2^{'}, x_3^{'}, x_4^{'}),
\end{equation}
where
\begin{eqnarray}
x_2^{'}&=&F(x_2,x_3,x_1,\alpha,\beta), \ \  x_4^{'}\ = \ F(x_4,x_1,x_3,\alpha,\beta), \label{E:map1} \ \ \ \\
 x_3^{'}&=& F(x_3,x_4^{'},x_2^{'},\alpha,\beta),  \ \
x_1^{'} \ = \ F(x_1,x_2^{'},x_4^{'},\alpha,\beta). \ \ \
\label{E:map2}
\end{eqnarray}

\begin{figure}[h] \label{22Red}
\begin{center}
\begin{tikzpicture}[line cap=round,line join=round,>=triangle 45,x=1.5cm,y=1.5cm][h]
\draw (0.0,1.0)-- (1.0,0.0);
\draw (1.0,0.0)-- (2.0,1.0);
\draw (2.0,1.0)-- (3.0,0.0);
\draw (3.0,0.0)-- (4.0,1.0);
\draw (4.0,1.0)-- (5.0,0.0);
\draw (0.0,1.0)-- (1.0,2.0);
\draw (1.0,2.0)-- (2.0,1.0);
\draw (2.0,1.0)-- (3.0,2.0);
\draw (3.0,2.0)-- (4.0,1.0);
\draw (5.0,0.0)-- (6.0,1.0);
\draw (4.0,1.0)-- (5.0,2.0);
\draw (5.0,2.0)-- (6.0,1.0);
\draw (1.0,2.0)-- (2.0,3.0);
\draw (2.0,3.0)-- (3.0,2.0);
\draw (3.0,2.0)-- (4.0,3.0);
\draw (4.0,3.0)-- (5.0,2.0);
\draw [->] (1.0,0.0) -- (1.0,2.0);
\draw [->] (2.0,1.0) -- (2.0,3.0);
\draw [->] (3.0,0.0) -- (3.0,2.0);
\draw [->] (4.0,1.0) -- (4.0,3.0);
\draw [fill=black] (0.0,1.0) circle (2pt);
\draw(-0.1,1.2085663160937274) node {$x_1$};
\draw [fill=black] (1.0,0.0) circle (2pt);
\draw(1.1115929156639173,-0.21046306447702798) node {$x_4$};
\draw (0.39235541979040217,0.5043129347175777) node {$\alpha$};
\draw [fill=black] (2.0,1.0) circle (2pt);
\draw (2.0005444724900006,1.2085663160937274) node {$x_3$};
\draw (1.7109574955585132,0.4593605912254829) node {$\beta$};
\draw [fill=black] (3.0,0.0) circle (2pt);
\draw (3.1044801438134484,-0.21046306447702798) node {$x_2$};
\draw(2.3702585334425685,0.4593605912254829) node {$\alpha$};
\draw [fill=black] (4.0,1.0) circle (2pt);
\draw(4.008415815136897,1.2085663160937274) node {$x_1$};
\draw (3.7038447237080443,0.4593605912254829) node {$\beta$};
\draw [fill=black] (5.0,0.0) circle (2pt);
\draw (5.112351486460345,-0.2046306447702798) node {$x_4$};
\draw (4.3631457615921,0.4593605912254829) node {$\alpha$};
\draw [fill=black] (1.0,2.0) circle (2pt);
\draw (0.91115929156639173,2.2725019874171747) node {$x_4^{'}$};
\draw (0.707021824235065,1.4632962625489305) node {$\beta$};
\draw (1.3663228621191206,1.4632962625489305) node {$\alpha$};
\draw [fill=black] (3.0,2.0) circle (2pt);
\draw (2.981044801438134484,2.3125019874171747) node {$x_2^{'}$};
\draw (2.6999090523845966,1.4632962625489305) node {$\beta$};
\draw(3.3741942047660167,1.4632962625489305) node {$\alpha$};
\draw [fill=black] (6.0,1.0) circle (2pt);
\draw(6.101303043286428,1.2085663160937274) node {$x_3$};
\draw (5.711716066354941,0.4593605912254829) node {$\beta$};
\draw [fill=black] (5.0,2.0) circle (2pt);
\draw (5.112351486460345,2.2125019874171747) node {$x_4^{'}$};
\draw (4.7077803950314925,1.4632962625489305) node {$\beta$};
\draw (5.367081432915548,1.4632962625489305) node {$\alpha$};
\draw [fill=black] (2.0,3.0) circle (2pt);
\draw (2.1005444724900006,3.216437658740622) node {$x_3^{'}$};
\draw(1.7109574955585132,2.452247819375013) node {$\beta$};
\draw(2.3702585334425685,2.452247819375013) node {$\alpha$};
\draw [fill=black] (4.0,3.0) circle (2pt);
\draw (4.108415815136897,3.216437658740622) node {$x_1^{'}$};
\draw(3.7038447237080443,2.452247819375013) node {$\beta$};
\draw(4.3631457615921,2.452247819375013) node {$\alpha$};
\end{tikzpicture}
\caption{The $(2,2)$ staircase periodic reduction \label{F:two_two_reduction}}
\end{center}
\end{figure}
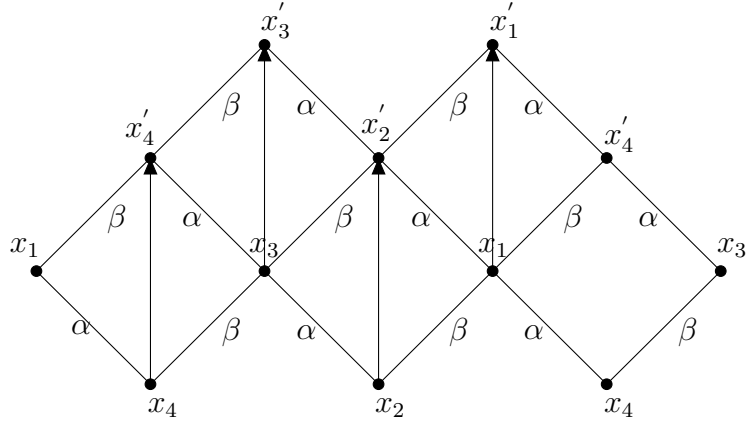

\subsection{Yang-Baxter maps}

A {\it{Yang-Baxter map}} is a map $R: \mathcal{X} \times
\mathcal{X} \rightarrow \mathcal{X} \times \mathcal{X}$,
$R:(x,y)\mapsto (u(x,y),v(x,y))$, that satisfies the {\em
Yang-Baxter equation} \cite{Baxter,Drin,ves2,Yang}
\begin{equation*}\label{YBprop}
 R_{23}\circ R_{13}\circ
R_{12}=R_{12}\circ R_{13}\circ R_{23}.
\end{equation*} Here by $R_{ij}$
for $i,j=1,2,3$, we denote the action of the map $R$ on the $i$ and
$j$ factor of $\mathcal{X} \times \mathcal{X} \times \mathcal{X}$,
i.e. $R_{12}(x,y,z)=(u(x,y),v(x,y),z)$,
$R_{13}(x,y,z)=(u(x,z),y,v(x,z))$ and
$R_{23}(x,y,z)=(x,u(y,z),v(y,z))$. A YB map $R:(\mathcal{X} \times
\mathcal{I}) \times (\mathcal{X} \times \mathcal{I}) \mapsto
(\mathcal{X} \times \mathcal{I}) \times (\mathcal{X} \times
\mathcal{I})$, with
\begin{equation} \label{pYB}
R:((x,\alpha),(y,\beta))\mapsto((u,\alpha),(v,\beta))= ((u(x,\alpha,
y,\beta),\alpha),(v(x,\alpha, y,\beta),\beta)),
\end{equation}
is called a \emph{parametric YB map} (\cite{ves2,ves3}). We usually
keep the parameters separately and denote (\ref{pYB}) as
$R_{\alpha,\beta}(x,y):\mathcal{X}\times \mathcal{X} \rightarrow
\mathcal{X} \times \mathcal{X}$. Generally, $\mathcal{X}$ can be
any set. From our point of view, the sets $\mathcal{X}$ and
$\mathcal{I}$ have the structure of an algebraic variety and the
maps that we consider are birational.

According to \cite{ves4}, a {\em Lax matrix} of the parametric YB
map (\ref{pYB}) is a matrix $L$ that depends on a point $x \in \mathcal{X}$,  
a parameter $\alpha \in \mathcal{I}$ and a spectral parameter $\zeta\in \mathbb{C}$, such
that
\begin{equation} \label{laxmat}
L(u,\alpha,\zeta)L(v,\beta,\zeta)=L(y,\beta,\zeta)L(x,\alpha,\zeta).
\end{equation}
Furthermore, if equation (\ref{laxmat}) is equivalent to 
$(u,v)=R_{\alpha,\beta}(x,y)$ then $L$ is called {\em
strong Lax matrix}. We often omit the spectral parameter $\zeta$ and denote the Lax matrix $L(x,\alpha,\zeta)$ just by $L(x,\alpha)$.
The next proposition, presented in \cite{kp1}
and essentially also in \cite{ves2}, provides a sufficient condition for
solutions of equation (\ref{laxmat}) to satisfy the YB equation.

\begin{proposition} \label{3fact}
If $u=u_{\alpha,\beta}(x,y), v=v_{\alpha,\beta}(x,y) $ satisfy
(\ref{laxmat}), for a matrix $L$  and the equation $$L(
\hat{x},\alpha )L( \hat{y} ,\beta )L(\hat{z}, \gamma )= L(x
,\alpha)L(y, \beta)L(z, \gamma)$$ implies that $\hat{x}=x, \
\hat{y}=y$ and $\hat{z}=z$, for every $x,y,z \in \mathcal{X}$, then
$R_{\alpha,\beta}(x,y)\mapsto(u,v)$ is a parametric YB map with Lax
matrix $L$.
\end{proposition}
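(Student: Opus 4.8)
The statement to establish is that $R_{\alpha,\beta}$ satisfies the Yang--Baxter equation $R_{23}\circ R_{13}\circ R_{12}=R_{12}\circ R_{13}\circ R_{23}$; that $L$ is a Lax matrix of $R$ is already built into the hypothesis (\ref{laxmat}), so this is the only substantive claim. The plan is to translate each elementary application of $R$ into a refactorization of a product of Lax matrices, and then to track what the two sides of the Yang--Baxter equation do to the single ordered product $L(z,\gamma)L(y,\beta)L(x,\alpha)$ attached to a generic triple $(x,y,z)$ carrying parameters $(\alpha,\beta,\gamma)$.

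First I would apply the left-hand composition $R_{23}\circ R_{13}\circ R_{12}$ to $(x,y,z)$, introduce names for the intermediate values, and write down, for each of the three elementary maps, the corresponding instance of (\ref{laxmat}). The point is that these three relations telescope: substituting them successively into $L(z,\gamma)L(y,\beta)L(x,\alpha)$, each step rewrites exactly the adjacent pair of factors on which the current map acts, so that after the three substitutions the product is brought to the form $L(\,\cdot\,,\alpha)L(\,\cdot\,,\beta)L(\,\cdot\,,\gamma)$ whose three entries are the components of $R_{23}\circ R_{13}\circ R_{12}(x,y,z)$. Repeating the same bookkeeping for the right-hand composition $R_{12}\circ R_{13}\circ R_{23}$ yields a second factorization of the very same product $L(z,\gamma)L(y,\beta)L(x,\alpha)$, again of the shape $L(\,\cdot\,,\alpha)L(\,\cdot\,,\beta)L(\,\cdot\,,\gamma)$, whose entries are the components of $R_{12}\circ R_{13}\circ R_{23}(x,y,z)$.

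Equating the two expressions then produces an identity
\[ L(\hat{x},\alpha)L(\hat{y},\beta)L(\hat{z},\gamma)=L(x',\alpha)L(y',\beta)L(z',\gamma), \]
where $(\hat{x},\hat{y},\hat{z})$ and $(x',y',z')$ are the outputs of the two compositions. This is precisely the situation the uniqueness hypothesis is designed for: it forces $\hat{x}=x'$, $\hat{y}=y'$, $\hat{z}=z'$, which is exactly the Yang--Baxter equation, and the proof is complete.

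The routine part is writing out the six instances of (\ref{laxmat}) and performing the substitutions. The hard part will be the ordering: with the factors written in the reversed order $L(z,\gamma)L(y,\beta)L(x,\alpha)$, one must check that at each stage the two matrices to be refactorized are genuinely adjacent. In particular $R_{13}$ acts on non-consecutive factors of $\mathcal{X}\times\mathcal{X}\times\mathcal{X}$, yet once the preceding map has been applied the two matrices it involves become adjacent in the product, so a single application of (\ref{laxmat}) does apply. Verifying this adjacency at every stage, and confirming that the resulting identity has exactly the form demanded by the hypothesis, is where the argument is won or lost.
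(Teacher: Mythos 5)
Your proposal is correct and takes essentially the same route as the paper's source: the paper itself does not prove Proposition \ref{3fact} but defers to \cite{kp1} and \cite{ves2}, and the argument given there is exactly the refactorization bookkeeping you describe — apply both compositions to the product $L(z,\gamma)L(y,\beta)L(x,\alpha)$, use (\ref{laxmat}) three times per side (the factors acted on by $R_{13}$ do indeed become adjacent after the first elementary map, on both sides, as you anticipate), and then invoke the unique-trifactorization hypothesis on the two resulting products of the form $L(\cdot,\alpha)L(\cdot,\beta)L(\cdot,\gamma)$ to equate the outputs componentwise. Nothing essential is missing; writing out the six instances of (\ref{laxmat}) explicitly would finish it.
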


The dynamical aspects of YB maps have been studied by Veselov
\cite{ves2,ves3}. For any YB map there is a hierarchy of commuting
transfer maps which preserve the spectrum of the corresponding
monodromy matrix.
Furthermore, YB maps and their Lax matrices are related to B{\"a}cklund transformations of 
continuous and discrete integrable systems (see e.g. \cite{sokor,skly3}).

\subsection{Lift of 3D consistent quad-graph equations to Yang-Baxter 
Maps } \label{secLift}

Three dimensional consistent equations on quad-graphs can be lifted
to YB maps. This lifting procedure has been described in
\cite{paptonlift}.

Let $Q(w,w_1,w_2,w_{12},\alpha,\beta)=0$ be a quad-graph 
equation affine linear in each argument
$w,w_1,w_2,w_{12}$. As before, we define the function $F$ by solving
this equation with respect to $w_{2}$, by
$w_{2}=F(w_1,w_,w_{12},\alpha,\beta).$
The lift of the equation $Q$ is defined as the map
$R_{\alpha,\beta}(x_1,x_2,y_1,y_2) \mapsto (u_1,u_2,v_1,v_2)$, with
\begin{equation} \label{liftYB}
u_1=F(y_1,x_1,y_2,\alpha,\beta), \ u_2=y_2, \ v_1=x_1, \
v_2=F(x_2,x_1,y_2,\alpha,\beta).
\end{equation}

As was shown in \cite{paptonlift}, under some additional conditions, $R_{\alpha,\beta}$ is a YB map. 
All the lifts of the equations of the ABS list are YB maps.
The proof of the YB property of these maps follows from the 3D
consistency of the initial equations.

The name lift of the quad-graph equation is
justified from the following observation. If we set $x_2=y_1$, then
$u_1=v_2$, and by labeling the variables as $x_2=y_1=w_1$,
$v_1=x_1=w$, $u_2=y_2=w_{12}$ and $u_1=v_2=w_{2}$, both first and
last equations of (\ref{liftYB}) reduce to the initial quad-graph
equation $Q(w,w_1,w_2,w_{12},\alpha,\beta)=0$. Having in mind this observation, we can easily prove the next proposition.

\begin{proposition}\label{propinverse}
If $L$ is a Lax matrix of the lift of a quad-graph equation then $L$
is also a Lax matrix of the quad-graph equation.
\end{proposition}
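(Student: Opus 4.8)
The plan is to read the reduction already observed in the text, ``set $x_2=y_1$'', not at the level of the map but at the level of the Lax equation. By definition a Lax matrix $L$ of the lift satisfies the identity (\ref{laxmat}) for every $x=(x_1,x_2)$ and $y=(y_1,y_2)$, with $(u,v)=R_{\alpha,\beta}(x,y)$ given by (\ref{liftYB}); here the point of $\mathcal{X}=\mathbb{C}^2$ carried by $L$ is the coordinate pair $(x_1,x_2)$, which I would match against the pair of consecutive fields $(w,w_1)$ appearing in the quad-graph Lax equation (\ref{laxquad}). So I would simply restrict the identity (\ref{laxmat}) to the subvariety $\{x_2=y_1\}$ and check that what remains is precisely (\ref{laxquad}).

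Concretely, I would set $x_2=y_1=w_1$ and rename the remaining free coordinates $x_1=w$ and $y_2=w_{12}$. Reading off (\ref{liftYB}) on this locus gives $v_1=x_1=w$, $u_2=y_2=w_{12}$, and
$$u_1=F(y_1,x_1,y_2,\alpha,\beta)=F(w_1,w,w_{12},\alpha,\beta)=F(x_2,x_1,y_2,\alpha,\beta)=v_2,$$
so that $u_1$ and $v_2$ take the common value $w_2:=F(w_1,w,w_{12},\alpha,\beta)$, which is exactly the quad-graph equation in the form (\ref{ffunction}). Thus on this locus the four arguments of $L$ in (\ref{laxmat}) become $u=(w_2,w_{12})$, $v=(w,w_2)$, $y=(w_1,w_{12})$ and $x=(w,w_1)$, and substituting these into (\ref{laxmat}) reproduces (\ref{laxquad}) verbatim.

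The proposition then follows, since (\ref{laxquad}) now holds for all $w,w_1,w_{12}$ with $w_2$ constrained by $Q(w,w_1,w_2,w_{12},\alpha,\beta)=0$, which is exactly the defining property of a Lax matrix of the quad-graph equation. I do not expect any analytic obstacle: the entire content is the bookkeeping of the second paragraph, namely correctly identifying the two-component arguments of $L$ in the lift with the consecutive-field arguments in (\ref{laxquad}), and verifying that $u_1$ and $v_2$ genuinely collapse to the same $w_2$ on $\{x_2=y_1\}$. The only point worth stating carefully is that (\ref{laxmat}) is an identity valid for all points (with $u,v$ the images under the map), so restricting it to a subvariety is legitimate; once that is granted, the result is essentially a corollary of the observation preceding the statement.
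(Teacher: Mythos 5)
Your proposal is correct and is essentially identical to the paper's own proof: both restrict the lift's Lax identity to the locus $x_2=y_1=w_1$, relabel $x_1=w$, $y_2=w_{12}$, note that $u_1=v_2=w_2=F(w_1,w,w_{12},\alpha,\beta)$ there, and read off (\ref{laxquad}). The only difference is that you spell out the verification $u_1=v_2$ explicitly, which the paper delegates to the observation preceding the proposition.
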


\begin{proof}
Let $L$ be a Lax matrix of the lift of a quad-graph equation $w_2=F(w_1,w,w_{12},\alpha,\beta)$.
Then
\begin{equation} \label{spr}
L(u_1,u_2,\alpha)L(v_1,v_2,\beta)=L(y_1,y_2,\beta)L(x_1,x_2,\alpha),
\end{equation}
for $u_1$, $u_2$, $v_1$ and $v_2$ given by (\ref{liftYB}). By setting
$x_2=y_1= w_1$,  $v_1=x_1=w$, $u_2=y_2=w_{12}$, we have $u_1=v_2=w_{2}$. If we substitute 
these to (\ref{spr}), we derive the Lax representation (\ref{laxquad}). \end{proof}

In many cases, including the examples that will be presented here, the converse of this proposition also holds. That means that
the Lax matrix of the initial equation is a Lax matrix of its lift (not always strong). In this cases, the YB property  of the
map (\ref{liftYB}) can be proved from  Proposition \ref{3fact}.

Using Proposition \ref{propinverse} we can reverse the procedure and
derive 3D consistent systems from YB maps of the form (\ref{liftYB})
as the next example shows.

\subsubsection*{Multiparametric cross-ratio equation}
The map
$R_{\bar{\alpha},\bar{\beta}}(x_1,x_2,y_1,y_2)=(F(y_1,x_1,y_2,\bar{\alpha},\bar{\beta}),y_2,x_1,F(x_2,x_1,y_2,\bar{\alpha},\bar{\beta})),$
with
$$F(x,x_1,x_2,\bar{\alpha},\bar{\beta})=
\frac{\beta \alpha_2 x_2(\alpha_1 x_1-\alpha_2 x)x_2 + \alpha
\beta_1 x_1(\beta_1x-\beta_2 x_2)} { \beta \alpha_1(\alpha_1
x_1-\alpha_2 x)+ \alpha \beta_2 (\beta_1x-\beta_2 x_2)}$$ and vector
parameters $\bar{\alpha}=(\alpha,\alpha_1,\alpha_2)$ and
$\bar{\beta}=(\beta,\beta_1,\beta_2)$ is a YB map\footnote{Case I YB map of the classification in \cite{kp3}, after a change of variables.} with Lax matrix
\begin{equation} \label{laxGcr}
L(x_1,x_2,\bar{\alpha})=\left(
\begin{array}{cc}
\alpha_1 \zeta +\frac{\alpha x_2}{\alpha_1 x_1-\alpha_2 x_2}&-\frac{\alpha x_1 x_2}{\alpha_1 x_1-\alpha_2 x_2}\\
\frac{\alpha}{\alpha_1 x_1-\alpha_2 x_2} & \alpha_2
\zeta-\frac{\alpha x_1}{\alpha_1 x_1-\alpha_2 x_2} \end{array}
\right).
\end{equation}
According to Proposition \ref{propinverse}, the quad-graph equation
$w_{2}=F(w_1,w,w_{12},\bar{\alpha},\bar{\beta})$ satisfies the Lax
equation (\ref{laxquad}). This equation can be seen as a 
multiparametric version of the cross-ratio equation (special case of equation (33) in \cite{HiVi}) and is given as follows
\begin{equation} \label{genQ1}
\alpha(\beta_1 w-\beta_2 w_2)(\beta_1 w_1-\beta_2 w_{12})-
\beta(\alpha_1 w-\alpha_2 w_1)(\alpha_1 w_2-\alpha_2 w_{12})=0.
\end{equation}

This equation satisfies the 3D consistency and the
tetrahedron properties \cite{ABS1}. Therefore, one can derive its
Lax pair which contains $3$ spectral parameters. For
$\alpha_1=\alpha_2=\beta_1=\beta_2=1$, equation (\ref{genQ1}) is
reduced to $Q_1$ with $\delta=0$ of ABS classification
(cross-ratio equation).
However, equation (\ref{genQ1}) does not possess the symmetries of
the square ($D_4$-symmetry) in the normal way, but it satisfies the
following symmetry property
$$Q(w,w_1,w_2,w_{12},\alpha,\alpha_1,\alpha_2,\beta,\beta_1,\beta_2)=-Q(w_{12},w_1,w_2,w,\beta,\beta_2,\beta_1,\alpha,\alpha_2,\alpha_1).$$
Also, by using the non-autonomous transformation $w_{l,m}\mapsto
A^{l-l_0}B^{m-m_0}w_{l,m}$, where $A=\alpha_1/\alpha_2$ and
$B=\beta_1/\beta_2$, equation \eqref{genQ1} is brought to the cross-ratio equation.
The map $R_{\bar{\alpha},\bar{\beta}}$ is a Poisson map with respect
to the Sklyanin bracket. We will give the corresponding Poisson
structure in the next section (example \ref{exGcr}).

\section{Poisson structure derived from the Sklyanin bracket}
In this section, we derive Poisson structures for the lifts and $(2,2)$ periodic reductions 
from the Sklyanin bracket \cite{skly1} that is related to their Lax representations. 

\subsection{The Sklyanin bracket on the lifts of quad-graph equations}

In some cases, Lax pairs of
YB maps are derived by reduction of polynomial matrices to
the symplectic leaves of the Sklyanin bracket and the
YB maps turn out to be symplectic with respect
to this bracket \cite{kp1,kp3}. Furthermore, the Sklyanin bracket
ensures that the integrals that we derive from the trace of the
corresponding monodromy matrix of periodic initial value problems
will be in involution \cite{bab,ves2,ves3}.

 We denote by $\mathbb{L}_{m}^n$ the set
of $m \times m$, $n-$degree polynomial matrices of the form
$$L(\mathbf{x},\zeta)=X_0+\zeta X_1+...+\zeta ^{n} X_n,$$
here $\mathbf{x}=(X_0,...,X_n)$, $ X_i \in Mat_{m \times
m}$ and $\zeta \in \mathbb{C}$. The functions that
depend on the coefficients $X_i$ form a Poisson algebra with respect
to the \textit{r-matrix quadratic Poisson bracket} or
\textit{Sklyanin bracket}, which in tensor notation is given by the
formula
\begin{equation}
\{L(\mathbf{x},\zeta ) \ \overset{\otimes }{,} \
L(\mathbf{x},h)\}=[\mathbf{r}(\zeta,h),L(\mathbf{x},\zeta )\otimes
L(\mathbf{x},h)]. \label{sklyanin}
\end{equation}
Generally, $\mathbf{r}(\zeta,h)$ is a classical $r$-matrix, i.e. a
solution of the classical YB equation \cite{bab,bel,rey,skly1, skly2}. We will consider the simple case
$\mathbf{r}(\zeta,h)=\frac{r}{\zeta-h}$, where $r$ denotes the
permutation matrix: $r(x\otimes y) = y\otimes x$.

The $m^2$ elements of the highest degree term $X_n$ and the $mn$ coefficients of the determinant of $L(\mathbf{x},\zeta )$
are Casimir functions. By restricting to the common level set of the Casimir functions, we derive $m^2(n+1)$ dimensional
matrices that satisfy (\ref{sklyanin}).

The Sklyanin bracket can be extended to the Cartesian product $\mathbb{L}_{m}^n \times \mathbb{L}_{m}^n$ in the natural
way by setting
\begin{eqnarray}
\{L(\mathbf{x},\zeta ) \ \overset{\otimes }{,} \
L(\mathbf{x},h)\}&=&[\mathbf{r}(\zeta,h),L(\mathbf{x},\zeta )\otimes
L(\mathbf{x},h)], \nonumber  \\
 \{L(\mathbf{y},\zeta ) \ \overset{\otimes }{,}
\ L(\mathbf{y},h)\}&=&[\mathbf{r}(\zeta,h),L(\mathbf{y},\zeta
)\otimes L(\mathbf{y},h)], \nonumber \\
 \{L(\mathbf{x},\zeta ) \ \overset{\otimes }{,} \ L(\mathbf{y},h)\} &=& 0,  \label{extskl}
\end{eqnarray}
for $\ (L(\mathbf{x},\zeta ),L(\mathbf{y},\zeta)) \in \mathbb{L}_{m}^n
\times \mathbb{L}_{m}^n$.

\begin{proposition} \label{inverse}
Let $R_{\alpha,\beta}:(\mathbf{x},\mathbf{y}) \mapsto (\mathbf{u},\mathbf{v})$ be a YB map with Lax matrix $L$ that satisfies (\ref{extskl}). Then
\begin{equation}
\{ L(\mathbf{u},\alpha,\zeta ) L(\mathbf{v},\beta,\zeta )  \overset{\otimes }{,}
L(\mathbf{u},\alpha,h) L(\mathbf{v},\beta,h ) \}=[\mathbf{r}(\zeta,h), L(\mathbf{u},\alpha,\zeta ) L(\mathbf{v},\beta,\zeta )\otimes
L(\mathbf{u},\alpha,h) L(\mathbf{v},\beta,h )]. \label{pom}
\end{equation}
\end{proposition}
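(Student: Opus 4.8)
The plan is to exploit the Lax equation (\ref{laxmat}) to replace the product $L(\mathbf{u},\alpha,\zeta)L(\mathbf{v},\beta,\zeta)$ by $L(\mathbf{y},\beta,\zeta)L(\mathbf{x},\alpha,\zeta)$, and then to prove that the Sklyanin bracket is \emph{multiplicative} on products of Lax matrices that are in involution. Since $R_{\alpha,\beta}$ has Lax matrix $L$, equation (\ref{laxmat}) holds identically on phase space, so as functions of $(\mathbf{x},\mathbf{y})$ we have
\[
L(\mathbf{u},\alpha,\zeta)L(\mathbf{v},\beta,\zeta)=L(\mathbf{y},\beta,\zeta)L(\mathbf{x},\alpha,\zeta)=:T(\zeta).
\]
Consequently the two sides have identical Poisson brackets, and it suffices to establish (\ref{pom}) with both products replaced by $T$. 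The advantage is that $T$ is written purely in terms of $L(\mathbf{x},\cdot)$ and $L(\mathbf{y},\cdot)$, for which the brackets are prescribed by (\ref{extskl}).

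First I would expand $\{T(\zeta)\overset{\otimes}{,}T(h)\}$ using the Leibniz (derivation) property of the Poisson bracket in each tensor slot. Writing $A=L(\mathbf{y},\beta,\zeta)$, $B=L(\mathbf{x},\alpha,\zeta)$, $C=L(\mathbf{y},\beta,h)$, $D=L(\mathbf{x},\alpha,h)$, and using the convention $\overset{1}{X}=X\otimes I$, $\overset{2}{Y}=I\otimes Y$ (so that $\overset{1}{X}$ and $\overset{2}{Y}$ always commute as matrices), the bracket $\{\overset{1}{A}\,\overset{1}{B}\overset{\otimes}{,}\overset{2}{C}\,\overset{2}{D}\}$ breaks into four terms. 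Two of these contain the cross-brackets $\{\overset{1}{B}\overset{\otimes}{,}\overset{2}{C}\}=\{L(\mathbf{x},\alpha,\zeta)\overset{\otimes}{,}L(\mathbf{y},\beta,h)\}$ and $\{\overset{1}{A}\overset{\otimes}{,}\overset{2}{D}\}=\{L(\mathbf{y},\beta,\zeta)\overset{\otimes}{,}L(\mathbf{x},\alpha,h)\}$, both of which vanish by the third relation in (\ref{extskl}) (the second via the antisymmetry of the bracket, which propagates the vanishing to the reversed slots). The surviving two terms are governed by the Sklyanin relations $\{\overset{1}{B}\overset{\otimes}{,}\overset{2}{D}\}=[\mathbf{r}(\zeta,h),\overset{1}{B}\,\overset{2}{D}]$ and $\{\overset{1}{A}\overset{\otimes}{,}\overset{2}{C}\}=[\mathbf{r}(\zeta,h),\overset{1}{A}\,\overset{2}{C}]$ from the first two lines of (\ref{extskl}).

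Next I would substitute these two commutators and simplify, using repeatedly that matrices in different tensor factors commute, in particular $\overset{2}{D}\,\overset{1}{B}=\overset{1}{B}\,\overset{2}{D}$ and $\overset{1}{A}\,\overset{2}{C}\,\overset{1}{B}\,\overset{2}{D}=\overset{1}{(AB)}\,\overset{2}{(CD)}=T(\zeta)\otimes T(h)$. After expanding the two commutators, the two terms of the form $\overset{1}{A}\,\overset{2}{C}\,\mathbf{r}(\zeta,h)\,\overset{1}{B}\,\overset{2}{D}$ cancel, and what remains collapses to $\mathbf{r}(\zeta,h)\,(T(\zeta)\otimes T(h))-(T(\zeta)\otimes T(h))\,\mathbf{r}(\zeta,h)=[\mathbf{r}(\zeta,h),T(\zeta)\otimes T(h)]$, which is the right-hand side of (\ref{pom}).

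I expect the main conceptual step to be the reduction via the Lax equation in the first paragraph: it is what makes the otherwise intractable brackets of $L(\mathbf{u},\cdot)$ and $L(\mathbf{v},\cdot)$ computable at all, by trading them for the prescribed brackets of $L(\mathbf{x},\cdot)$ and $L(\mathbf{y},\cdot)$. One must check that (\ref{laxmat}) is used as an identity on phase space, which is legitimate precisely because $L$ is a Lax matrix of $R_{\alpha,\beta}$. The remaining work, namely the Leibniz expansion, the discarding of the cross-brackets, and the cancellation of the mixed $\mathbf{r}(\zeta,h)$-terms, is the standard argument for the multiplicativity of the $r$-matrix bracket, where the only thing to watch carefully is the tensor-slot index bookkeeping.
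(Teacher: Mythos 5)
Your proposal is correct and takes essentially the same approach as the paper: the paper likewise uses the Lax equation (\ref{laxmat}) as an identity on phase space to replace $L(\mathbf{u},\alpha,\zeta)L(\mathbf{v},\beta,\zeta)$ by $L(\mathbf{y},\beta,\zeta)L(\mathbf{x},\alpha,\zeta)$ and then applies the multiplicativity (comultiplication) property of the Sklyanin bracket to that product. The only difference is that the paper cites this property as well known in a footnote, whereas you prove it explicitly via the Leibniz expansion and the vanishing of the cross-brackets from (\ref{extskl}); your cancellation of the mixed $\mathbf{r}(\zeta,h)$ terms is the standard argument and is carried out correctly.
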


\begin{proof}
From equations (\ref{extskl}) we derive
\begin{equation*}
\{ L(\mathbf{y},\beta,\zeta ) L(\mathbf{x},\alpha,\zeta )  \overset{\otimes }{,}
L(\mathbf{y},\beta,h) L(\mathbf{x},\alpha,h ) \}=[\mathbf{r}(\zeta,h), L(\mathbf{y},\beta,\zeta ) L(\mathbf{x},\alpha,\zeta )\otimes
L(\mathbf{y},\beta,h) L(\mathbf{x},\alpha,h )]. \footnote{This is a well known  property of the Sklyanin bracket, often referred to as the \it{comultiplication} property.}
\end{equation*}
Using (\ref{laxmat}), we have
\begin{eqnarray*}
\{ L(\mathbf{u},\alpha,\zeta ) L(\mathbf{v},\beta,\zeta )  \overset{\otimes }{,}
L(\mathbf{u},\alpha,h) L(\mathbf{v},\beta,h ) \}= \{ L(\mathbf{y},\beta,\zeta ) L(\mathbf{x},\alpha,\zeta )  \overset{\otimes }{,}
L(\mathbf{y},\beta,h) L(\mathbf{x},\alpha,h ) \} \\
= [\mathbf{r}(\zeta,h), L(\mathbf{y},\beta,\zeta ) L(\mathbf{x},\alpha,\zeta )\otimes
L(\mathbf{y},\beta,h) L(\mathbf{x},\alpha,h )]  \ \\
= [\mathbf{r}(\zeta,h), L(\mathbf{u},\alpha,\zeta ) L(\mathbf{v},\beta,\zeta )\otimes
L(\mathbf{u},\alpha,h) L(\mathbf{v},\beta,h )].
\end{eqnarray*}
\end{proof}

 Equation (\ref{pom}) is a necessary condition for a YB map to be Poisson
 with respect to the bracket (\ref{extskl}). In many cases, the Poisson property of these maps follows from the uniqueness of the refactorization of the Lax matrices (\cite{kp1,kp3}).

\begin{example} \label{liftKdVex}

The map
\begin{equation} \label{liftKdV}
R_{\alpha,\beta}(x_1,x_2,y_1,y_2)=(y_1+ \frac{\alpha -
\beta}{x_1-y_2},y_2,x_1,x_2+ \frac{\alpha - \beta}{x_1-y_2})
\end{equation}
 is a parametric YB map with (not strong) Lax matrix
 $$L(x_1,x_2,\alpha)=\left(
\begin{array}{cc}
 x_1 & \alpha+x_1 x_2-\zeta \\
 -1 & -x_2
\end{array}
\right), $$
This map was derived
in \cite{kp1} and can be considered as a lift of KdV quad-graph
equation ($H_1$ equation of the ABS classification list \cite{ABS1})
$$(w_{12}-w)(w_1-w_2)=\alpha-\beta.$$
We can verify that equations (\ref{extskl}) are equivalent to 
$$ \ \{x_1,x_2 \}=1, \ \{y_1,y_2\}=1, \ \{x_i,y_j\}=0, \  \text{for} \ i,j=1,2,$$
and that the YB map (\ref{liftKdV}) is Poisson with respect to this bracket.
\end{example}

\begin{example} \label{exGcr}
The Lax matrix of the multiparametric cross-ratio equation
(\ref{laxGcr}), presented in the previous section, satisfies the
Sklyanin bracket. In this case the extended Sklyanin bracket
(\ref{extskl}) is equivalent to 
\begin{equation} \label{Gcrbr}
\ \{x_1,x_2 \}=-\frac{(\alpha_1 x_1- \alpha_2 x_2)^2}{\alpha},
\{y_1,y_2\}=-\frac{(\beta_1 y_1- \beta_2 y_2)^2}{\beta},
\{x_i,y_j\}=0.
\end{equation}
The corresponding lift of this equation $R_{\bar{\alpha},\bar{\beta}}$ is Poisson
with respect to this bracket.
\end{example}

\subsection{The Sklyanin bracket on $(2,2)$ staircase periodic
reductions}

Next, we establish a connection between the lifts and the $(2,2)$ periodic reductions of quad-graph equations. 
In this way, and under some additional conditions, the Poisson structure of the lift of a quad-graph equation gives rise to a suitable Poisson structure for the periodic reduction. We begin with the following lemma.

We consider a function $F:\mathbb{I} \subset \mathbb{C}^5 \rightarrow \mathbb{C}$, such that
 \begin{equation} \label{D4sy}
 F(x,x_1,x_2,\alpha,\beta)=F(x,x_2,x_1,\beta,\alpha),
\end{equation}
as well as the three parametric maps
\begin{eqnarray}
R_{\alpha,\beta}(x_1,x_2,y_1,y_2) = (F(y_1,x_1,y_2,\alpha,\beta),y_2,x_1,F(x_2,x_1,y_2,\alpha,\beta)) :=(u_1,u_2,v_1,v_2) \label{R}\\
T^1_{\alpha,\beta}(x_1,x_2,y_1,y_2) = (F(y_1,x_1,y_2,\alpha,\beta),y_2,x_1,F(x_2,y_2,x_1,\alpha,\beta)) :=(u_1',u_2',v_1',v_2') \label{S}\\
T^2_{\alpha,\beta}(x_1,x_2,y_1,y_2) =
(F(y_1,y_2,x_1,\alpha,\beta),y_2,x_1,F(x_2,x_1,y_2,\alpha,\beta))
:=(\tilde{u}_1,\tilde{u}_2,\tilde{v}_1,\tilde{v}_2) \label{T}
\end{eqnarray}

\begin{lemma} \label{lemma1}
 Let $R_{\alpha,\beta}$ be a Poisson map with respect to a Poisson structure of the form
\begin{equation} \label{poisson1}
\pi_1=J_{\alpha,\beta}(x_1,x_2)\frac{\partial}{\partial x_1} \wedge
\frac{\partial}{\partial
x_2}+J_{\beta,\alpha}(y_1,y_2)\frac{\partial}{\partial y_1} \wedge
\frac{\partial}{\partial y_2},
\end{equation}
 then the map
$\mathbf{S}_{\alpha,\beta}=T^2_{\alpha,\beta} \circ T^1_{\alpha,\beta}$
is Poisson with respect to
\begin{equation} \label{poisson2}
\pi_2=J_{\beta,\alpha}(x_1,x_2)\frac{\partial}{\partial x_1} \wedge
\frac{\partial}{\partial
x_2}-J_{\beta,\alpha}(y_1,y_2)\frac{\partial}{\partial y_1} \wedge
\frac{\partial}{\partial y_2},
\end{equation}
 if and only if
\begin{equation} \label{condition}
J_{\beta,\alpha}(x_1,x_2)\frac{\partial u_1'}{\partial x_1}  \frac{\partial v_2'}{\partial x_2}=J_{\beta,\alpha}(y_1,y_2)\frac{\partial u_1'}{\partial y_1}  \frac{\partial v_2'}{\partial y_2}.
\end{equation}
\end{lemma}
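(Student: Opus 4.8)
The plan is to read the Poisson property as a collection of identities among the component functions of the maps, and to exploit the structural fact, visible from (\ref{R})–(\ref{T}) together with the square symmetry (\ref{D4sy}), that the components of $T^1_{\alpha,\beta}$ and $T^2_{\alpha,\beta}$ are all drawn from the components of the two lifts $R_{\alpha,\beta}$ and $R_{\beta,\alpha}$. The first step is to observe that, writing $\{\,,\}_{\pi_2}$ for the bracket of $\pi_2$, a direct expansion gives
\[
\{u_1',v_2'\}_{\pi_2}=J_{\beta,\alpha}(x_1,x_2)\,\tfrac{\partial u_1'}{\partial x_1}\,\tfrac{\partial v_2'}{\partial x_2}-J_{\beta,\alpha}(y_1,y_2)\,\tfrac{\partial u_1'}{\partial y_1}\,\tfrac{\partial v_2'}{\partial y_2},
\]
because $u_1'$ does not depend on $x_2$ and $v_2'$ does not depend on $y_1$; hence condition (\ref{condition}) is exactly the statement $\{u_1',v_2'\}_{\pi_2}=0$.

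For the necessity (the ``only if'') I would argue directly. Composing (\ref{S}) and (\ref{T}) and tracking which input becomes which output, one checks from the definitions alone that the second and third components of $\mathbf{S}_{\alpha,\beta}=T^2_{\alpha,\beta}\circ T^1_{\alpha,\beta}$ are precisely $v_2'$ and $u_1'$. If $\mathbf{S}_{\alpha,\beta}$ is Poisson with respect to $\pi_2$, the bracket of its second and third components must reproduce $\{x_2,y_1\}_{\pi_2}=0$, i.e.\ $\{v_2',u_1'\}_{\pi_2}=0$, which by the first step is precisely (\ref{condition}).

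For the sufficiency (the ``if'') I would factor $\mathbf{S}_{\alpha,\beta}$ through an intermediate structure. Since the hypothesis that $R_{\alpha,\beta}$ is Poisson with respect to $\pi_1$ is an identity in the parameters, relabelling $\alpha\leftrightarrow\beta$ shows $R_{\beta,\alpha}$ is Poisson with respect to the structure obtained from (\ref{poisson1}) by interchanging $\alpha$ and $\beta$. Using (\ref{D4sy}) I identify $u_1'$ with the first component of $R_{\alpha,\beta}$ and $v_2'$ with the fourth component of $R_{\beta,\alpha}$, and symmetrically $\tilde u_1,\tilde v_2$ with components of $R_{\beta,\alpha}$ and $R_{\alpha,\beta}$. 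Expanding the six image brackets of $T^1_{\alpha,\beta}$ under $\pi_2$, I expect every mixed bracket to vanish from the variable dependence of the components except $\{u_1',v_2'\}_{\pi_2}$, while the two diagonal brackets reproduce, through the Poisson relations for $R_{\alpha,\beta}$ and $R_{\beta,\alpha}$, the intermediate structure
\[
\pi_3=-J_{\alpha,\beta}(u_1',u_2')\,\tfrac{\partial}{\partial u_1'}\wedge\tfrac{\partial}{\partial u_2'}+J_{\alpha,\beta}(v_1',v_2')\,\tfrac{\partial}{\partial v_1'}\wedge\tfrac{\partial}{\partial v_2'}.
\]
Thus, granting (\ref{condition}), $T^1_{\alpha,\beta}$ is Poisson from $\pi_2$ to $\pi_3$. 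A parallel computation shows $T^2_{\alpha,\beta}$ is Poisson from $\pi_3$ to $\pi_2$, the only mixed term needing control being $\{\tilde u_1,\tilde v_2\}_{\pi_3}$; a short calculation with (\ref{D4sy}) shows this equals the $\alpha\leftrightarrow\beta$ image of (\ref{condition}), which holds as soon as (\ref{condition}) does. Composing the two Poisson maps yields $(\mathbf{S}_{\alpha,\beta})_*\pi_2=\pi_2$.

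The main obstacle is this sufficiency half, and within it two points demand care. First is the correct guess of the intermediate bivector $\pi_3$ together with the bookkeeping of which argument of $F$ each component of $T^1_{\alpha,\beta}$ and $T^2_{\alpha,\beta}$ depends on; it is exactly this dependence pattern that makes all but one of the mixed brackets collapse and that lets the diagonal brackets be rewritten via the Poisson relations of the two lifts. Second is the recognition that the extra condition produced by $T^2_{\alpha,\beta}$ is not a new requirement but coincides, through the symmetry (\ref{D4sy}) and the parameter interchange, with condition (\ref{condition}) itself, so that a single hypothesis governs both factors. Once these are settled, the two halves combine to give the claimed equivalence.
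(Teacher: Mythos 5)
Your proposal is correct, and its necessity half is exactly the paper's: both identify the second and third components of $\mathbf{S}_{\alpha,\beta}$ as $v_2'$ and $u_1'$, so that the Poisson property forces $\pi_2(dv_2',du_1')=0$, which by the dependence pattern of $u_1'$ and $v_2'$ is precisely (\ref{condition}). Where you genuinely diverge is the sufficiency half. The paper introduces no intermediate structure: it derives the four identities (\ref{proof1})--(\ref{proof4}) from the Poisson property of $R_{\alpha,\beta}$, the parameter interchange and (\ref{D4sy}), and then computes all six brackets $\pi_2(dU_i,dU_j)$, $\pi_2(dV_i,dV_j)$, $\pi_2(dU_i,dV_j)$ of the composite map by the chain rule, exhibiting each one as the required value plus a multiple of the single defect $J_{\beta,\alpha}(x_1,x_2)\frac{\partial u_1'}{\partial x_1}\frac{\partial v_2'}{\partial x_2}-J_{\beta,\alpha}(y_1,y_2)\frac{\partial u_1'}{\partial y_1}\frac{\partial v_2'}{\partial y_2}$, which (\ref{condition}) kills. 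Your factorization $(T^1_{\alpha,\beta})_*\pi_2=\pi_3$, $(T^2_{\alpha,\beta})_*\pi_3=\pi_2$ uses the same four identities but packages them as two Poisson-map statements composed by functoriality of the pushforward; this is more modular and arguably cleaner, since each factor requires controlling only one nontrivial mixed bracket, whereas the paper must track how the defect propagates through every bracket of the composite (multiplied by derivatives of $U_1$, $V_2$ with respect to the intermediate variables). One point you should make explicit, though it is equally tacit in the paper: the condition governing $T^2_{\alpha,\beta}$, namely (\ref{condition2}) or your $\{\tilde u_1,\tilde v_2\}_{\pi_3}=0$, is the image of (\ref{condition}) under $\alpha\leftrightarrow\beta$, so deducing it from (\ref{condition}) requires reading (\ref{condition}) as an identity in the parameters (holding for the pair $(\beta,\alpha)$ as well as $(\alpha,\beta)$); this is the intended parametric-family reading of the lemma, and the paper assumes it when it declares the two conditions equivalent.
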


In the case that the function $F$ is defined by a quad-graph 
equation $Q(w,w_1,w_2,w_{12})$  from (\ref{ffunction}), then the
map $R_{\alpha,\beta}$ is the lift of this equation and
the map
\begin{equation} \label{perrduction}
\mathcal{S}_{\alpha,\beta}(x_1,x_2,x_3,x_4)=\mathbf{S}_{\alpha,\beta}(x_1,x_2,x_4,x_3)
\end{equation}
is its $(2,2)$ staircase periodic reduction (\ref{E:reduction}).
In the coordinates $(x_1,x_2,x_3,x_4)$, the corresponding Poisson structure $\pi_2$ of the last lemma becomes
\begin{equation} \label{poisson3}
\pi_2=J_{\beta,\alpha}(x_1,x_2)\frac{\partial}{\partial x_1} \wedge
\frac{\partial}{\partial
x_2}+J_{\beta,\alpha}(x_3,x_4)\frac{\partial}{\partial x_3} \wedge
\frac{\partial}{\partial x_4}
\end{equation}
and condition (\ref{condition}) can be written as
$\pi_2(dx_2^{'},dx_4^{'})=0$, where $x_2^{'},x_4^{'}$ are given in
(\ref{E:map1}).

As we saw in the previous section, in many cases the lift of a
quad-graph equation is Poisson with respect to the extended Sklyanin
bracket. Obviously the bracket (\ref{extskl}) is of the form of
(\ref{poisson1}) for some function $J_{\alpha,\beta}$ (that depends
only in one parameter). In these cases, if the conditions of lemma
\ref{lemma1} are fulfilled, a modification of the Sklyanin bracket
from $\pi_1$ to $\pi_2$ gives rise to a suitable Poisson structure
for the $(2,2)$ periodic reduction of the initial quad-graph
equation.

Finally, we have to remark that the property (\ref{D4sy}), follows
from the following symmetry of the square
\begin{equation} \label{sqsym}
Q(w,w_1,w_2,w_{12},\alpha,\beta)=\pm
Q(w_{12},w_1,w_2,w,\beta,\alpha).
\end{equation}

We can summarise all these facts in the following proposition.

\begin{proposition} \label{proplem}
We consider a quad-graph equation
$Q(w,w_1,w_2,w_{12},\alpha,\beta)$, that satisfies the symmetry
condition (\ref{sqsym}), with corresponding lift $R_{\alpha,\beta}$
a Poisson YB map with respect to the Sklyanin bracket (\ref{extskl})
and Lax matrix $L$. The $(2,2)$ periodic reduction
$\mathcal{S}_{\alpha,\beta}:(x_1,x_2,x_3,x_4)\mapsto(x_1^{'},
x_2^{'}, x_3^{'}, x_4^{'})$ is Poisson with respect to the bracket
defined by
\begin{eqnarray*}
\{L(x_1,x_2,\beta,\zeta ) \ \overset{\otimes }{,} \ L(x_1,x_2,\beta
,h)\}_2&=&[\mathbf{r}(\zeta,h),L(x_1,x_2,\beta,\zeta ) \otimes
L(x_1,x_2,\beta,h)], \nonumber  \\
\{L(x_3,x_4,\beta,\zeta ) \ \overset{\otimes }{,} \ L(x_3,x_4,\beta
,h)\}_2&=&[\mathbf{r}(\zeta,h),L(x_3,x_4,\beta,\zeta ) \otimes
L(x_3,x_4,\beta,h)]
\end{eqnarray*}
and $\{ x_1,x_3 \}_2=\{ x_1,x_4 \}_2=\{ x_2,x_3 \}_2=\{ x_2,x_4
\}_2=0$, if and only if $\{x_2^{'},x_4^{'} \}_2=0$.
\end{proposition}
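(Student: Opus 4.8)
The plan is to deduce this Proposition as the Sklyanin-bracket specialization of Lemma \ref{lemma1}, transported from the coordinates $(x_1,x_2,y_1,y_2)$ of the lift to the coordinates $(x_1,x_2,x_3,x_4)$ of the $(2,2)$ reduction. First I would check that the hypotheses of Lemma \ref{lemma1} are in force: the symmetry (\ref{sqsym}) of $Q$ yields the property (\ref{D4sy}) of $F$, which is exactly what is needed for the maps $T^1_{\alpha,\beta},T^2_{\alpha,\beta}$ and the factorization $\mathbf{S}_{\alpha,\beta}=T^2_{\alpha,\beta}\circ T^1_{\alpha,\beta}$ to be defined and compatible with the reduction; and the hypothesis that the lift $R_{\alpha,\beta}$ is Poisson with respect to the extended Sklyanin bracket (\ref{extskl}) supplies a structure $\pi_1$ of the form (\ref{poisson1}), with $J_{\alpha,\beta}$ the (one-parameter) coefficient function read off from (\ref{sklyanin}).

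Second I would identify the bracket $\{\cdot,\cdot\}_2$ defined in the statement with the structure $\pi_2$ of Lemma \ref{lemma1} written in reduction coordinates. The two Sklyanin relations on the blocks $L(x_1,x_2,\beta)$ and $L(x_3,x_4,\beta)$, together with the vanishing of all cross-brackets, say precisely that $\{\cdot,\cdot\}_2$ is the block-diagonal bivector (\ref{poisson3}). I would obtain this from (\ref{poisson2}) by the relabeling dictated by (\ref{perrduction}): writing $\mathcal{S}_{\alpha,\beta}=\sigma\circ\mathbf{S}_{\alpha,\beta}\circ\sigma$ with $\sigma$ the transposition exchanging the third and fourth slots, the substitution $y_1=x_4,\ y_2=x_3$ sends $\partial_{y_1}\wedge\partial_{y_2}$ to $-\partial_{x_3}\wedge\partial_{x_4}$, so the minus sign in the $y$-block of (\ref{poisson2}) is absorbed and $\pi_2$ becomes (\ref{poisson3}). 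Since $\sigma$ intertwines $\pi_2$ in the coordinates $(x_1,x_2,y_1,y_2)$ with the structure (\ref{poisson3}) in the coordinates $(x_1,x_2,x_3,x_4)$, the map $\mathcal{S}_{\alpha,\beta}$ is Poisson with respect to $\{\cdot,\cdot\}_2$ if and only if $\mathbf{S}_{\alpha,\beta}$ is Poisson with respect to $\pi_2$.

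Third I would translate condition (\ref{condition}) into $\{x_2',x_4'\}_2=0$. Under the same substitution the two $T^1$-outputs appearing in (\ref{condition}) become $u_1'=F(x_4,x_1,x_3,\alpha,\beta)=x_4'$ and $v_2'=F(x_2,x_3,x_1,\alpha,\beta)=x_2'$. Expanding $\pi_2(dx_2',dx_4')$ with (\ref{poisson3}) and using that $x_2'$ is independent of $x_4$ and $x_4'$ is independent of $x_2$, within each block the term containing $\partial x_4'/\partial x_2$ or $\partial x_2'/\partial x_4$ vanishes, leaving exactly one surviving monomial per block; these two reproduce the left- and right-hand sides of (\ref{condition}). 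Hence $\{x_2',x_4'\}_2=0$ is equivalent to (\ref{condition}). Combining this equivalence with Lemma \ref{lemma1} and the conjugation of the previous step yields the claim.

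Finally, a word on where the real work sits. Most of the analytic content has already been absorbed into Lemma \ref{lemma1}; the remaining obstacle is pure bookkeeping but genuinely delicate. One must track the transposition $\sigma$ consistently through the bivector, making sure the sign reversal in the $y$-block of (\ref{poisson2}) is precisely cancelled by the orientation change $\partial_{y_1}\wedge\partial_{y_2}\mapsto-\partial_{x_3}\wedge\partial_{x_4}$, and that the argument-ordering of $J_{\beta,\alpha}$ on the $(x_3,x_4)$ block agrees with the Sklyanin bracket of $L(x_3,x_4,\beta)$. Getting these orderings right is exactly what makes the single scalar identity (\ref{condition}) coincide on the nose with $\{x_2',x_4'\}_2=0$, so that the ``if and only if'' passes faithfully from the lift picture to the reduction picture.
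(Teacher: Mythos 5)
Your proposal is correct and follows essentially the same route as the paper: the paper offers no separate proof of Proposition \ref{proplem}, presenting it explicitly as a summary of the preceding discussion --- (\ref{sqsym}) implies (\ref{D4sy}), the extended Sklyanin bracket (\ref{extskl}) is of the form (\ref{poisson1}), Lemma \ref{lemma1} (proved in the Appendix) gives the equivalence with condition (\ref{condition}), and the relabeling (\ref{perrduction}) turns $\pi_2$ into (\ref{poisson3}) and (\ref{condition}) into $\{x_2',x_4'\}_2=0$. If anything, your treatment is more careful than the paper's, since you track the slot transposition $\sigma$ through the bivector and flag the argument-ordering of $J_{\beta,\alpha}$ on the $(x_3,x_4)$ block, a point the paper passes over silently.
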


\begin{example}
We consider the lift of KdV quad-graph equation of  example  \ref{liftKdVex}
$$R_{\alpha,\beta}(x_1,x_2,y_1,y_2) = (F(y_1,x_1,y_2,\alpha,\beta),y_2,x_1,F(x_2,x_1,y_2,\alpha,\beta)) ,$$
where $F(x,x_1,x_2,\alpha,\beta)=x+ \frac{\alpha -\beta}{x_1-x_2}$.
As we mentioned before, this is a Poisson map with respect to the
Poisson structure (\ref{poisson1}), for
$J_{\alpha,\beta}(x_1,x_2)=1$ and satisfies the condition
(\ref{condition}). So, the corresponding $(2,2)$ staircase periodic
reduction $\mathcal{S}_{\alpha,\beta}$ (\ref{perrduction}) is
Poisson with respect to the bracket $\pi_2=\frac{\partial}{\partial
x_1} \wedge \frac{\partial}{\partial x_2}+\frac{\partial}{\partial
x_3} \wedge \frac{\partial}{\partial x_4}.$
\end{example}

\begin{example}
We consider the 3D consistent quad-graph equation
\begin{equation} \label{Q1}
\alpha (w-w_2)(w_1-w_{12})-\beta(w-w_1)(w_2-w_{12})=0.
\end{equation}
This is the $Q_1$ quad-graph equation of the ABS list for $\delta=0$.
The lift of the equation gives rise to the YB map
$$R_{\alpha,\beta}(x_1,x_2,y_1,y_2) = (F(y_1,x_1,y_2,\alpha,\beta),y_2,x_1,F(x_2,x_1,y_2,\alpha,\beta)) ,$$
with $$F(x,x_1,x_2,\alpha,\beta)=\frac{\alpha x_1(x-x_2)+\beta x_2(x_1-x)}{\alpha(x-x_2)+\beta(x_1-x)}$$
and (strong) Lax matrix
 $$L(x_1,x_2,\alpha)=\left(
\begin{array}{cc}
 \zeta +\frac{\alpha x_2}{x_1-x_2} &-\frac{\alpha  x_1 x_2}{x_1-x_2} \\
\frac{\alpha }{x_1-x_2}  &\zeta- \frac{\alpha x_1}{x_1-x_2}
\end{array}
\right). $$ In this case, the corresponding Sklyanin bracket
(\ref{extskl}) is equivalent to 
$$ \ \{x_1,x_2 \}=-\frac{(x_1-x_2)^2}{\alpha}, \ \{y_1,y_2\}=-\frac{(y_1-y_2)^2}{\beta}, \ \{x_i,y_j\}=0, \  \text{for} \ i,j=1,2,$$
i.e. the Poisson structure (\ref{poisson1}) for
$J_{\alpha,\beta}(x_1,x_2)=-\frac{(x_1-x_2)^2}{\alpha}$. The YB map
$R_{\alpha,\beta}$ is Poisson with respect to this structure  and
the condition (\ref{condition}) is satisfied. So, the corresponding
periodic reduction (\ref{perrduction})
is Poisson with respect to
$$\pi_2=-\frac{(x_1-x_2)^2}{\beta}\frac{\partial}{\partial x_1} \wedge
\frac{\partial}{\partial
x_2}-\frac{(x_3-x_4)^2}{\beta}\frac{\partial}{\partial x_3} \wedge
\frac{\partial}{\partial x_4}.$$

\end{example}


We have to remark that there are cases where the lift of a
quad-graph equation is Poisson with respect to the Sklyanin bracket 
but the $(2,2)$ periodic reduction is not Poisson with respect to
the corresponding bracket of Proposition \ref{proplem}, because some
of the conditions are not satisfied. For example, the lift of the
multiparametric cross-ratio equation (\ref{genQ1}) is a Poisson map
with respect to the Sklyanin bracket (\ref{Gcrbr}), but this
equation does not satisfy the symmetry condition (\ref{sqsym}) and
the corresponding $(2,2)$ periodic reduction is not Poisson with
respect to the bracket of Proposition \ref{proplem}.

Furthermore, we cannot define a Poisson bracket from 
(\ref{extskl}) for any Lax representation of a lift of a quad-graph
equation. In the cases that the quad-graph equations have an
equivalent three-leg form (e.g. the equations of the ABS list), we can
derive directly a Poisson structure for the $(2,2)$ periodic
reductions and subsequently (except for the $Q_4$ case) a Poisson
structure for the corresponding lifts. We study these cases in the
next section.

\section{Poisson structure derived from three-leg forms}
In this section, we derive Poisson brackets for the $(2,2)$ periodic reduction (Figure \ref{F:two_two_reduction}) of the ABS equations from the so-called three-leg forms.

Recall that a three-leg form of equation \eqref{eqQ} centred at $f$ is defined as follows
\begin{equation}
\label{E:3leg_form}
\psi(w,w_1,\alpha)-\psi(w,w_2,\beta)=\phi(w,w_{12},\alpha,\beta),
\end{equation}
for some functions $\psi$ and $\phi$ such that equation
\eqref{E:3leg_form} is equivalent to equation~\eqref{eqQ}. In this
equation, $\psi$ is called a short leg and $\phi$  a long leg. It
is noted that all equations in the ABS list \cite{ABS1} (after some
transformations) have  a  three leg-form. Due to the $D_4$ symmetry
properties of these equations,  their three-leg forms can be
centred at any vertex of the quadrilateral.
\begin{proposition}
\label{P:symplectics}
The $(2,2)$ periodic reduction map $\mathcal{S}_{\alpha,\beta}:(x_1,x_2,x_3,x_4)\mapsto(x_1^{'}, x_2^{'}, x_3^{'}, x_4^{'})$ of the ABS equations, 
described by  \eqref{E:map1} and \eqref{E:map2}, satisfies 
\begin{equation}
\label{E:sympletic}
s(x_1,x_2,\beta) dx_1\wedge dx_2+s(x_3,x_4,\beta)dx_3\wedge dx_4=s(x_1^{'},x_2^{'},\beta) dx_1^{'}\wedge dx_2^{'}+s(x_3^{'},x_4^{'},\beta)dx_3^{'}\wedge dx_4^{'},
\end{equation}
where the function $s(w,w_1,\alpha)$  is symmetric when we change $w\leftrightarrow w_1$ and is  given by
\[
s(w,w_1,\alpha)=\frac{\partial \psi(w,w_1,\alpha)}{\partial w_1}= \frac{\partial \psi(w_1,w,\alpha)}{\partial w}.
\]
\end{proposition}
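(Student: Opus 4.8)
The plan is to read $\mathcal{S}_{\alpha,\beta}$ off as the simultaneous solution of the four elementary quadrilateral equations attached to the strip of Figure~\ref{F:two_two_reduction}, and then to prove \eqref{E:sympletic} by a telescoping argument over these four quadrilaterals in which the three-leg form \eqref{E:3leg_form} serves as a generating relation for the two-form. Writing $F$ through \eqref{ffunction}, equations \eqref{E:map1}--\eqref{E:map2} identify the four quadrilaterals $Q(w,w_1,w_2,w_{12},\alpha,\beta)=0$ with vertices $(x_3,x_2,x_2',x_1)$ and $(x_1,x_4,x_4',x_3)$ (the two that produce $x_2'$ and $x_4'$) and $(x_4',x_3,x_3',x_2')$ and $(x_2',x_1,x_1',x_4')$ (the two that produce $x_3'$ and $x_1'$). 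On each one the $\beta$-edges are the short vertical legs and the $\alpha$-edges the short horizontal legs; the bottom $\beta$-edges are $\{x_1,x_2\},\{x_3,x_4\}$ and the top $\beta$-edges are $\{x_1',x_2'\},\{x_3',x_4'\}$, so the two sides of \eqref{E:sympletic} are exactly the sums of the densities $s(\cdot,\cdot,\beta)$ over the bottom and the top $\beta$-edges.

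The basic building block is a single-quadrilateral identity. Differentiating \eqref{E:3leg_form} centred at a vertex $v$ and taking the exterior product with $dv$ kills every term containing $dv$; since $\partial\psi(v,\cdot,\gamma)/\partial(\cdot)=s(v,\cdot,\gamma)$, what survives has the shape
\[
s(v,v_\alpha,\alpha)\,dv\wedge dv_\alpha-s(v,v_\beta,\beta)\,dv\wedge dv_\beta=\frac{\partial\phi}{\partial v_d}(v,v_d,\alpha,\beta)\,dv\wedge dv_d ,
\]
where $v_\alpha,v_\beta$ are the $\alpha$- and $\beta$-neighbours of $v$ and $v_d$ is the vertex opposite to $v$. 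Thus the two-forms carried by the two short legs at $v$ differ only by the diagonal two-form $D(v,v_d):=(\partial_2\phi)(v,v_d,\alpha,\beta)\,dv\wedge dv_d$. The decisive structural fact, which I would establish next, is that for the ABS long legs $\partial_2\phi(a,b,\alpha,\beta)$ is symmetric under $a\leftrightarrow b$ — this is inherited from the symmetry of $s$ once the long leg is written through the short-leg density — so that $D$ is antisymmetric, $D(a,b)=-D(b,a)$.

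First I would apply this block to the two lower quadrilaterals $(x_3,x_2,x_2',x_1)$ and $(x_1,x_4,x_4',x_3)$, centring at $x_1$ and $x_3$ (using the $D_4$-symmetry that lets the three-leg form be centred at any vertex). Each produces a relation between an internal $\alpha$-edge ($\{x_1,x_2'\}$, resp. $\{x_3,x_4'\}$) and a bottom $\beta$-edge, with diagonal terms $D(x_1,x_3)$ and $D(x_3,x_1)$; by antisymmetry these cancel upon adding, giving
\[
s(x_1,x_2',\alpha)\,dx_1\wedge dx_2'+s(x_3,x_4',\alpha)\,dx_3\wedge dx_4'=s(x_1,x_2,\beta)\,dx_1\wedge dx_2+s(x_3,x_4,\beta)\,dx_3\wedge dx_4 .
\]
Then I would run the identical computation on the two upper quadrilaterals $(x_4',x_3,x_3',x_2')$ and $(x_2',x_1,x_1',x_4')$, centred at $x_4'$ and $x_2'$; their diagonal terms $D(x_4',x_2')$ and $D(x_2',x_4')$ again cancel, and the very same internal $\alpha$-edge two-form $s(x_1,x_2',\alpha)\,dx_1\wedge dx_2'+s(x_3,x_4',\alpha)\,dx_3\wedge dx_4'$ now equals the sum over the two top $\beta$-edges. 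Equating the two expressions for this internal $\alpha$-edge two-form yields \eqref{E:sympletic} at once.

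The genuine obstacle is not the telescoping but the verification that the diagonal two-forms are exactly antisymmetric, i.e. that $\partial_2\phi$ is symmetric in its two field arguments: this is what makes the long-leg contributions cancel in each pair, and it is where the specific structure of the ABS three-leg forms (and the symmetry of $s$) is essential. The rest is orientation bookkeeping — choosing consistent edge orientations so that it is precisely the two internal $\alpha$-edges that are generated by both the lower and the upper pair, and tracking the signs coming from $da\wedge db=-db\wedge da$ together with $s(a,b,\gamma)=s(b,a,\gamma)$.
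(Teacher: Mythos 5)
Your proposal follows essentially the same route as the paper's proof: the same four quadrilaterals, the same centring of the three-leg forms (at $x_1,x_3$ for the lower pair and at $x_2',x_4'$ for the upper pair), the same differentiate-and-wedge step with cancellation of the diagonal long-leg terms, and the same two intermediate identities combined at the end. The ``decisive structural fact'' you flag --- symmetry of $\partial_2\phi$ in its two field arguments --- is precisely what the paper imports from Lemma 9 of \cite{ABS1}, so it needs only to be cited, not re-established.
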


\begin{proof}
It is known that the function $s$ is symmetric (see Lemma 9, \cite{ABS1}).  Differentiating three leg-forms centred at $x_1$ and $x_3$ and wedging with $dx_1$ and $dx_3$ respectively, we get
\begin{align*}
\frac{\partial \psi(x_1,x_2,\beta)}{\partial x_2} dx_1\wedge dx_2-\frac{\partial \psi(x_1,x_2^{'},\alpha)}{\partial x_2^{'}} dx_1\wedge dx_2^{'}&=\frac{\partial \phi(x_1,x_3,\alpha,\beta)}{\partial x_3} dx_1\wedge dx_3,\\
\frac{\partial \psi(x_3,x_4,\beta)}{\partial x_4} dx_3\wedge dx_4-\frac{\partial \psi(x_3,x_4^{'},\alpha)}{\partial x_4^{'}}dx_3\wedge dx_4^{'} &=\frac{\partial \phi(x_3,x_1,\alpha,\beta)}{\partial x_3} dx_3\wedge dx_1.
\end{align*}
Using Lemma 9 \cite{ABS1}, we get 
$\frac{\partial \phi(x_1,x_3,\alpha,\beta)}{\partial x_3}=\frac{\partial \phi(x_3,x_1,\alpha,\beta)}{\partial x_3}$. 
Therefore, we have
\[
s(x_1,x_2,\beta)dx_1\wedge dx_2-s(x_1,x_2^{'},\alpha)dx_1\wedge dx_2^{'}=-s(x_3,x_4,\beta)dx_3\wedge dx_4+s(x_3,x_4^{'},\alpha) dx_3\wedge dx_4{'},
\]
which implies
\begin{equation}
\label{E:identity1}
s(x_1,x_2,\beta)dx_1\wedge dx_2+s(x_3,x_4,\beta)dx_3\wedge dx_4=s(x_1,x_2^{'},\alpha)dx_1\wedge dx_2^{'}+s(x_3,x_4^{'},\alpha) dx_3\wedge dx_4{'}.
\end{equation}
Similarly, we get
\begin{equation}
\label{E:identity2}
s(x_1,x_2^{'},\alpha)dx_1\wedge dx_2^{'}+s(x_3,x_4^{'},\alpha) dx_3\wedge dx_4=s(x_1^{'},x_2^{'},\beta) dx_1^{'}\wedge dx_2^{'}+s(x_3^{'},x_4^{'},\beta)dx_3^{'}\wedge dx_4^{'}.
\end{equation}
Using \eqref{E:identity1} and \eqref{E:identity2}, we obtain \eqref{E:sympletic}.
\end{proof}

For equations  in the ABS list that require point transformations for the field variables and lattice parameters in order to bring their three-leg form to (\ref{E:3leg_form}), this proposition holds in the new variables. Next we list the function $s$ for all the cases of the ABS list.

\begin{itemize}
\item $H_1$:  $s(w,w_1,\alpha)=1$
\item $H_2$:  $s(w,w_1,\alpha)=\frac{1}{w+w_1+\alpha}$
\item $H_3^{\delta=0}$:   $s(w,w_1,\alpha)=\frac{1}{ww_1}$
\item $H_3^{\delta=1}$:   $s(w,w_1,\alpha)=\frac{1}{ww_1+\alpha}$

\item $Q_1^{\delta=0}$:
$ s(w,w_1,\alpha)=\frac{\alpha}{(w-w_1)^2}$

\item $Q_1^{\delta=1}$:
$s(w,w_1,\alpha)=-\frac{1}{w-w_1+\alpha}+\frac{1}{w-w_1-\alpha}$
\item $Q_2$:$s(w,w_1\alpha)=\frac{\alpha}{(w-w_1)^2-2 \alpha^2(w+w_1)+\alpha^4}$
\item
$Q_3^{\delta=0}$:
$s(w,w_1,\alpha)=\frac{\alpha^2-1}{(w_1-\alpha w)(w-\alpha w_1)}$
\item
$Q_3^{\delta=1}$:
$s(w,w_1\alpha)=N/D$
where
\begin{small}
\begin{align*}
N&=2\, \left(r(w)w+{w}^{2}-1 \right)  \left( r(w_1)w_{{1}}+{w_{{1}}}^{2}-1 \right)
 \left( w+
r(w) \right)  \left( w_{{1}}+r(w_1)\right) \alpha\, \left( {\alpha}^{2}-1 \right)
\\
D&=\Big( \alpha\,r(w_1)r(w)+\alpha\,r(w_1)w+
\alpha\,w_{{1}}r(w)+\alpha\,w_{{1}}w-1 \Big)
 \left( \alpha\,r(w)-r(w_1)+\alpha\,w-w_{{1}}
 \right)
 \\
 &\quad
  \left( r(w_1)r(w)+r(w_1)w+w_{{1}}r(w)+
w_{{1}}w-\alpha \right)
 \left( r(w)-\alpha r(w_1)-\alpha\,w_{{1}}+w \right) r(w)r(w_1)
\end{align*}
\end{small}
and $r(w)=\sqrt{w^2-1}$.
\item $Q_4$: we use the Hietarinta form \cite{Hie} for the $Q_4$ equation and we get
\begin{equation}
\label{Q4_Poi}
s(w,w_1,\alpha)=\frac{1}{\alpha^2 w^2 w_1^2+2 a ww_1+\alpha^2-w^2-w_1^2}, \ a=\sqrt{\alpha^4+\delta \alpha^2+1}.
\end{equation}
\end{itemize}

The 2-form $\omega_2=s(x_1,x_2,\beta) dx_1\wedge dx_2+s(x_3,x_4,\beta)dx_3\wedge dx_4$  is a symplectic form on the 
subset where $s(x_1,x_2,\beta) s(x_3,x_4,\beta) \neq 0$.  So, Proposition \ref{P:symplectics} implies that the maps derived by 
the $(2,2)$ periodic reduction of the ABS equations are symplectic with respect to it. Moreover, this proposition still holds when we change the edges $(x_1,x_2)$  and $(x_3,x_4)$ to $(x_4,x_1)$ and $(x_2,x_3)$ respectively and the parameter $\beta$ to $\alpha$. In other words,  
the map  $\mathcal{S}_{\alpha,\beta}$ is also symplectic with respect to $\omega_2'=s(x_2,x_3,\alpha) dx_2\wedge dx_3+s(x_4,x_1,\alpha)dx_4\wedge dx_2$.  

 \begin{remark} \label{rm3leg}
The 2-forms that were presented in \cite{ABS1} (Proposition 12) are obtained by adding these two symplectic structures, $\omega_2+\omega_2'$.
However it is important to note that these 2-forms are degenerate for  the cases of $H_1$ and $H_3^{\delta=0}$ and not symplectic as $\omega_2$ and $\omega_2'$. 
So they cannot convert to Poisson brackets for the associated maps by inverting their structure matrix.  However, Proposition \ref{P:symplectics} in this paper gives us a suitable Poisson structure for $(2,2)$ periodic reductions of these equations, as well as much simpler Poisson structure
for the non-degenerate cases. 
\end{remark}

\begin{corollary} \label{corPstr22}
The map \eqref{E:reduction} preserves the Poisson bracket  which is given by
\begin{equation} \label{ps22}
\frac{1}{s(x_1,x_2,\beta)}\frac{\partial}{\partial x_1} \wedge
\frac{\partial}{\partial
x_2}+\frac{1}{s(x_3,x_4,\beta)}\frac{\partial}{\partial x_3} \wedge
\frac{\partial}{\partial x_4}.
\end{equation}
\end{corollary}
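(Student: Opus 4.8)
The plan is to obtain this as an immediate consequence of Proposition~\ref{P:symplectics} via the standard duality between a non-degenerate symplectic $2$-form and its inverse Poisson bivector. Read as an identity of $2$-forms, Proposition~\ref{P:symplectics} says precisely that $\mathcal{S}_{\alpha,\beta}^{*}\omega_2=\omega_2$, where $\omega_2=s(x_1,x_2,\beta)\,dx_1\wedge dx_2+s(x_3,x_4,\beta)\,dx_3\wedge dx_4$; that is, the map is symplectic with respect to $\omega_2$. As noted after that proposition, on the open subset where $s(x_1,x_2,\beta)\,s(x_3,x_4,\beta)\neq 0$ this form is non-degenerate, so it admits a well-defined inverse Poisson structure there.

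Next I would write the structure matrix $\Omega$ of $\omega_2$ in the coordinates $(x_1,x_2,x_3,x_4)$. It is block-diagonal,
\begin{equation*}
\Omega=\begin{pmatrix} 0 & s(x_1,x_2,\beta) & 0 & 0\\ -s(x_1,x_2,\beta) & 0 & 0 & 0\\ 0 & 0 & 0 & s(x_3,x_4,\beta)\\ 0 & 0 & -s(x_3,x_4,\beta) & 0\end{pmatrix},
\end{equation*}
and inverting each $2\times2$ block gives
\begin{equation*}
\Omega^{-1}=\begin{pmatrix} 0 & \tfrac{1}{s(x_1,x_2,\beta)} & 0 & 0\\ -\tfrac{1}{s(x_1,x_2,\beta)} & 0 & 0 & 0\\ 0 & 0 & 0 & \tfrac{1}{s(x_3,x_4,\beta)}\\ 0 & 0 & -\tfrac{1}{s(x_3,x_4,\beta)} & 0\end{pmatrix},
\end{equation*}
which is exactly the structure matrix of the bivector~\eqref{ps22}.

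Finally I would invoke the elementary fact that a diffeomorphism preserving a symplectic form preserves the associated Poisson bivector. Writing $J$ for the Jacobian of $\mathcal{S}_{\alpha,\beta}$, the relation $\mathcal{S}_{\alpha,\beta}^{*}\omega_2=\omega_2$ is equivalent to the matrix identity $J^{T}\,(\Omega\circ\mathcal{S}_{\alpha,\beta})\,J=\Omega$, where $\Omega\circ\mathcal{S}_{\alpha,\beta}$ denotes $\Omega$ with its coefficient functions evaluated at the image point. Inverting this identity yields $J\,\Omega^{-1}\,J^{T}=\Omega^{-1}\circ\mathcal{S}_{\alpha,\beta}$, which is precisely the statement that $\mathcal{S}_{\alpha,\beta}$ is a Poisson map for the bracket~\eqref{ps22}.

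The only point requiring care is purely bookkeeping---tracking the pullback of the coefficient functions $s(\cdot,\cdot,\beta)$ so that the matrix identity reads correctly at each point---rather than any genuine obstacle; the substantive content is already supplied by Proposition~\ref{P:symplectics}, and everything is valid on the open dense subset where $\omega_2$ is symplectic.
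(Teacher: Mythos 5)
Your proposal is correct and follows essentially the same route as the paper, which presents Corollary \ref{corPstr22} as an immediate consequence of Proposition \ref{P:symplectics} together with the nondegeneracy of $\omega_2$ on the set where $s(x_1,x_2,\beta)\,s(x_3,x_4,\beta)\neq 0$, the bracket \eqref{ps22} being exactly the inverse of the structure matrix of $\omega_2$. The only slip is a sign in your bookkeeping: the inverse of $\begin{pmatrix} 0 & s\\ -s & 0\end{pmatrix}$ is $\begin{pmatrix} 0 & -1/s\\ 1/s & 0\end{pmatrix}$, but this is harmless, since a map preserving $\Omega$ preserves both $\pm\Omega^{-1}$, i.e.\ it preserves the bracket \eqref{ps22} regardless of the sign convention relating symplectic forms to Poisson bivectors.
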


We also note that  one can use the same technique to obtain symplectic forms for two component pKdV \cite{Hereman}. These forms will be reduced to the symplectic forms $\omega_2$ and $\omega_2'$ of pKdV if we set the two components to be equal.
On the other hand, the mutiparametric cross-ratio equation has a three-leg form but it does not give us  a similar result as  in Corollary~\ref{corPstr22}, since the corresponding function $s$ is not symmetric, i.e. $s(x_1,x_2,\alpha) \neq s(x_2,x_1,\alpha)$.

In addition, by considering the Poisson bracket of the $(2,2)$ periodic reductions we can derive a suitable Poisson structure for the corresponding lifts
of the quad-graph equations (as YB maps).  First we notice that the Poisson structure (\ref{ps22}) is equivalent with the bracket $\pi_2$ in (\ref{poisson3})
for $J_{\beta,\alpha}(x_1,x_2)=\frac{1}{s(x_1,x_2,\beta)}$. Then we can check if the corresponding YB maps $R_{\alpha,\beta}$ of lemma \ref{lemma1} are 
Poisson with respect to $\pi_1$ in (\ref{poisson1}). By direct calculation we can prove the next proposition.

\begin{proposition}
The lifts of the quad-graph equations $H_1,H_2,H_3,Q_1,Q_2,Q_3$, are
Poisson YB maps with respect to the Poisson structure $\pi_1$
described in (\ref{poisson1}), for
$J_{\alpha,\beta}(x_1,x_2)=\frac{1}{s(x_1,x_2,\alpha)}$.
\end{proposition}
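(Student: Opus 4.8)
The plan is to recast the Poisson property as the invariance of a symplectic $2$-form and to verify it by matching the coefficients of a pulled-back form, using the three-leg presentation \eqref{E:3leg_form} of the ABS equations throughout.

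First I would pass from $\pi_1$ to its dual. On the locus where $s(x_1,x_2,\alpha)\,s(y_1,y_2,\beta)\neq 0$, the bivector (\ref{poisson1}) with $J_{\alpha,\beta}(x_1,x_2)=1/s(x_1,x_2,\alpha)$ is inverse to
\[
\omega=s(x_1,x_2,\alpha)\,dx_1\wedge dx_2+s(y_1,y_2,\beta)\,dy_1\wedge dy_2,
\]
so the assertion is equivalent to $R_{\alpha,\beta}^{*}\omega=\omega$. Writing $R_{\alpha,\beta}$ as in (\ref{liftYB}) and using $u_2=y_2$ and $v_1=x_1$, hence $du_2=dy_2$ and $dv_1=dx_1$, the pullback reduces to
\[
R_{\alpha,\beta}^{*}\omega=s(u_1,y_2,\alpha)\,du_1\wedge dy_2+s(x_1,v_2,\beta)\,dx_1\wedge dv_2.
\]
Since $u_1=F(y_1,x_1,y_2,\alpha,\beta)$ does not involve $x_2$ and $v_2=F(x_2,x_1,y_2,\alpha,\beta)$ does not involve $y_1$, expanding $du_1$ and $dv_2$ leaves only the monomials $dx_1\wedge dx_2$, $dy_1\wedge dy_2$ and $dx_1\wedge dy_2$. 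Matching these against $\omega$ turns the proposition into three scalar identities: (A) $s(x_1,v_2,\beta)\,\partial v_2/\partial x_2=s(x_1,x_2,\alpha)$; (B) $s(u_1,y_2,\alpha)\,\partial u_1/\partial y_1=s(y_1,y_2,\beta)$; and the cross term (C) $s(u_1,y_2,\alpha)\,\partial u_1/\partial x_1+s(x_1,v_2,\beta)\,\partial v_2/\partial y_2=0$.

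For (A) and (B) I would differentiate three-leg forms implicitly. The relation defining $v_2$ is the quadrilateral equation with centre $x_1$, whose three-leg form reads $\psi(x_1,x_2,\alpha)-\psi(x_1,v_2,\beta)=\phi(x_1,y_2,\alpha,\beta)$; differentiating in $x_2$ and recalling $s(w,w_1,\alpha)=\partial\psi(w,w_1,\alpha)/\partial w_1$ gives (A) at once. For (B) I would centre the same equation at the diagonal vertex $y_2$ (permissible by the $D_4$ symmetry used for \eqref{E:3leg_form}), obtaining $\psi(y_2,u_1,\alpha)-\psi(y_2,y_1,\beta)=\phi(y_2,x_1,\alpha,\beta)$; differentiating in $y_1$ and invoking the symmetry $s(w,w_1,\alpha)=s(w_1,w,\alpha)$ (Lemma 9 of \cite{ABS1}, already used for Proposition \ref{P:symplectics}) yields (B). Both are one-line implicit-function computations.

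The cross term (C) is the crux. Here I would first subtract the two three-leg forms centred at $x_1$ for the relations defining $u_1$ and $v_2$; they share the same long leg $\phi(x_1,y_2,\alpha,\beta)$, so their difference gives
\[
\psi(x_1,y_1,\alpha)-\psi(x_1,x_2,\alpha)=\psi(x_1,u_1,\beta)-\psi(x_1,v_2,\beta).
\]
The left-hand side is independent of $y_2$, so differentiating in $y_2$ shows that $s(x_1,v_2,\beta)\,\partial v_2/\partial y_2$ equals $s(x_1,u_1,\beta)\,\partial u_1/\partial y_2$; substituting this into (C) and then feeding in the $y_2$- and $x_1$-derivatives of the three-leg forms centred at $x_1$ and at $y_1$ collapses (C) to a single identity attached to one quadrilateral,
\[
s(w,w_1,\alpha)\,s(w_2,w_{12},\alpha)=\phi_2(w,w_{12},\alpha,\beta)\,\phi_2(w_1,w_2,\alpha,\beta),
\]
where $\phi_2$ denotes the derivative of the long leg in its second argument and $w,w_1,w_2,w_{12}$ satisfy the equation. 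I expect this identity to be the genuine obstacle: it holds transparently for $H_1$, where both sides equal $1$ after using $(w-w_{12})(w_1-w_2)=\alpha-\beta$, but establishing it uniformly is where one must either exploit the factorisation of the ABS biquadratics (the relations behind Lemma 9 of \cite{ABS1}) or simply verify it for each of $H_1,H_2,H_3,Q_1,Q_2,Q_3$ from the explicit $s$ and three-leg data listed above, which is precisely the \emph{direct calculation} alluded to in the statement.
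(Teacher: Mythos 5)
Your proposal is correct, and it takes a genuinely different route from the paper: for this proposition the paper offers no structural argument at all, saying only that it follows ``by direct calculation'', i.e.\ by computing the brackets of $u_1,u_2,v_1,v_2$ case by case from the explicit $F$ and $s$ of each equation. What you do instead is transplant the three-leg-form technique that the paper uses only for the $(2,2)$ reductions (Proposition \ref{P:symplectics}) to the lifts themselves. Your decomposition is sound: since $u_2=y_2$, $v_1=x_1$, $u_1$ is independent of $x_2$ and $v_2$ is independent of $y_1$, the pullback of the dual $2$-form indeed contains only $dx_1\wedge dx_2$, $dy_1\wedge dy_2$ and $dx_1\wedge dy_2$, giving exactly your identities (A), (B), (C); (A) and (B) do follow by one-line implicit differentiation of the three-leg forms centred at $x_1$ and at $y_2$ (the latter legitimate by the $D_4$ symmetry, together with the symmetry of $s$); and your reduction of the cross term (C) --- first eliminating $v_2$ via the difference of the two forms centred at $x_1$, then using the forms centred at $w$ and at $w_1$ on the remaining quadrilateral --- correctly collapses it to the single identity $s(w,w_1,\alpha)\,s(w_2,w_{12},\alpha)=\phi_2(w,w_{12},\alpha,\beta)\,\phi_2(w_1,w_2,\alpha,\beta)$ on solutions of $Q=0$. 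I checked this identity beyond your $H_1$ case: for $Q_1^{\delta=0}$ one has $s(w,w_1,\alpha)=\alpha/(w-w_1)^2$, $\phi_2=(\alpha-\beta)/(w-w_{12})^2$, and the equation gives $\alpha(w-w_{12})(w_1-w_2)=-(\alpha-\beta)(w-w_1)(w_2-w_{12})$, so both sides agree after squaring. The trade-off between the two approaches: the paper's implicit route is elementary but opaque and must be redone wholesale for each equation, whereas yours localizes all case-dependence in one scalar identity per equation --- whose verification is precisely the finite ``direct calculation'' the paper leaves to the reader, so your proof is at least as complete as the paper's --- and it has extra explanatory value: since (A) and (B) hold for any equation possessing a three-leg form, including $Q_4$, the paper's observation that the lift of $Q_4$ is \emph{not} Poisson must be a failure of your identity (C), which pinpoints exactly where the obstruction lives.
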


Surprisingly, the lift of $Q_4$ with the corresponding structure
derived by \eqref{Q4_Poi} is not a Poisson map.

\section{Poisson structure on specific systems of quad-graph equations }

Lifts of 3D consistent quad-graph equations,
$w_2=F(w_1,w,w_{12},\alpha,\beta)$, are YB maps of the specific form
$R_{\alpha,\beta}(x_1,x_2,y_1,y_2) =
(F(y_1,x_1,y_2,\alpha,\beta),y_2,x_1,F(x_2,x_1,y_2,\alpha,\beta))$.
Next, we consider more general YB maps of the form
\begin{equation}
R_{\alpha,\beta}(x_1,x_2,y_1,y_2) = (F_1(y_1,x_1,y_2,\alpha,\beta),y_2,x_1,F_2(x_2,x_1,y_2,\alpha,\beta)) :=(u_1,u_2,v_1,v_2), \label{Rsys}
\end{equation}
that involve two different functions $F_1$ and $F_2$.
As we are going to show, these maps are related to a specific kind of systems of quadrilateral equations. As before, suitable Poisson structures of
the YB maps give rise to suitable Poisson structure of specific periodic reductions of the system.  We summarise our results in the next proposition.

\begin{proposition}  \label{propSys}
Let $R_{\alpha,\beta}$, defined by (\ref{Rsys}), be a YB map with Lax matrix $L$. Then \\
$\alpha)$  the system of equations
\begin{equation} \label{sysF1F2}
w_2=F_2(w_1,t,w_{12},\alpha,\beta), \ t_2=F_1(t_1,t,w_{12},\alpha,\beta),
\end{equation}
satisfies the Lax representation
\begin{equation} \label{laxsys}
L(t_2,w_{12},\alpha)L(t,w_2,\beta)=L(t_1,w_{12},\beta)L(t,w_1,\alpha).
 \end{equation}
 $\beta)$ The $(1,1)$ periodic reduction $\mathcal{S}_{\alpha,\beta}((x_1,y_1),(x_2,y_2))=R_{\alpha,\beta}  \circ R_{\alpha,\beta}(y_1,x_2,y_2,x_1)$. \\
 $\gamma)$ If $R_{\alpha,\beta}$ is Poisson with respect to the Sklyanin bracket (\ref{extskl}), then the $(1,1)$ periodic reduction
 $\mathcal{S}_{\alpha,\beta}$ of the system (\ref{sysF1F2}) is Poisson with respect to the bracket
 \begin{eqnarray*}
\{L(y_1,x_2,\alpha,\zeta ) \ \overset{\otimes }{,} \
L(y_1,x_2,\alpha ,h)\}&=&[\mathbf{r}(\zeta,h),L(y_1,x_2,\alpha,\zeta ) \otimes
L(y_1,x_2,\alpha,h)], \nonumber  \\
\{L(y_2,x_1,\beta,\zeta ) \ \overset{\otimes }{,} \
L(y_2,x_1,\beta ,h)\}&=&[\mathbf{r}(\zeta,h),L(y_2,x_1,\beta,\zeta ) \otimes
L(y_2,x_1,\beta,h)]
\end{eqnarray*}
and $\{ x_i,x_j \}=\{ y_i,y_j \}=\{ x_1,y_1 \}=\{ x_2,y_2 \}=0$.

 \end{proposition}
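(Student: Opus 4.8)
The plan is to handle the three parts in order, reusing the ``squeezing down'' device of Proposition~\ref{propinverse} for part $\alpha$ and the comultiplication property exploited in Proposition~\ref{inverse} for part $\gamma$. For part $\alpha$, I would start from the Lax equation (\ref{laxmat}) of the YB map (\ref{Rsys}), written as
$$L(u_1,u_2,\alpha)L(v_1,v_2,\beta)=L(y_1,y_2,\beta)L(x_1,x_2,\alpha),$$
with $u_1=F_1(y_1,x_1,y_2,\alpha,\beta)$, $u_2=y_2$, $v_1=x_1$, $v_2=F_2(x_2,x_1,y_2,\alpha,\beta)$, and impose the identifications $y_1=t_1$, $y_2=w_{12}$, $x_1=t$, $x_2=w_1$. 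Under these, $u_2=w_{12}$ and $v_1=t$ hold automatically, while $u_1=F_1(t_1,t,w_{12},\alpha,\beta)=t_2$ and $v_2=F_2(w_1,t,w_{12},\alpha,\beta)=w_2$ hold precisely when $(w,t)$ solve the system (\ref{sysF1F2}). Substituting these into the Lax equation turns it into (\ref{laxsys}). This is the two-field analogue of Proposition~\ref{propinverse} and requires no computation beyond matching arguments.

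For part $\beta$, I would set up the $(1,1)$ staircase initial value problem for the system (\ref{sysF1F2}): place the pairs $(x_1,y_1)$ and $(x_2,y_2)$ on two consecutive vertices of the staircase and impose the diagonal periodicity. Solving the system on the two elementary quadrilaterals needed to advance the staircase by one diagonal step amounts, after relabelling the edge variables, to applying $R_{\alpha,\beta}$ twice; the coordinate permutation $(x_1,y_1,x_2,y_2)\mapsto(y_1,x_2,y_2,x_1)$ records exactly how the initial data feed into the first factor. I would confirm this identification by direct substitution into (\ref{Rsys}), checking that the two successive refactorizations reproduce the update rule coming from (\ref{sysF1F2}). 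This step is essentially lattice bookkeeping, parallel to the derivation of (\ref{perrduction}) in the $(2,2)$ case.

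For part $\gamma$, the key object is the monodromy matrix $T(\zeta)=L(y_1,x_2,\alpha,\zeta)\,L(y_2,x_1,\beta,\zeta)$ of the reduction. The stated bracket is nothing but the extended Sklyanin bracket (\ref{extskl}) written in the regrouped variables $(y_1,x_2)$ (parameter $\alpha$) and $(y_2,x_1)$ (parameter $\beta$), the vanishing cross-brackets $\{x_i,x_j\}=\{y_i,y_j\}=\{x_1,y_1\}=\{x_2,y_2\}=0$ guaranteeing that the two Lax blocks Poisson-commute. Hence, by the comultiplication property established inside the proof of Proposition~\ref{inverse}, $T(\zeta)$ satisfies the Sklyanin relation. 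I would then argue that $\mathcal{S}_{\alpha,\beta}$, being the composite $R_{\alpha,\beta}\circ R_{\alpha,\beta}$ of part $\beta$, acts on $T(\zeta)$ by a refactorization built from two applications of (\ref{laxmat}); since $R_{\alpha,\beta}$ is Poisson with respect to (\ref{extskl}), each such refactorization preserves the Sklyanin bracket, and so does the composite. Invoking the uniqueness of the refactorization (the strong Lax property) to re-identify the image blocks then yields the claim.

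The step I expect to be the main obstacle is in part $\gamma$: one must check that after the two applications of $R_{\alpha,\beta}$ the output variables reassemble into blocks of exactly the same form $L(\cdot,\cdot,\alpha)$, $L(\cdot,\cdot,\beta)$ as the input, so that $\mathcal{S}_{\alpha,\beta}$ is genuinely a Poisson self-map for the bracket as written, not merely for a conjugate of it. Tracking the parameters $\alpha,\beta$ and the permutation through both refactorizations, and verifying that the prescribed cross-brackets remain zero, is the delicate point; by comparison parts $\alpha$ and $\beta$ are routine.
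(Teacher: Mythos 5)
Your parts $\alpha$ and $\beta$ coincide with the paper's proof: the paper makes exactly your identifications $x_1=t$, $x_2=w_1$, $y_1=t_1$, $y_2=w_{12}$ (whence $u_1=t_2$, $v_2=w_2$) and substitutes into the YB Lax equation (\ref{laxmat}), and for $\beta$ it performs precisely the direct substitution you outline, verifying that $(y_1',x_2',y_2',x_1')=R_{\alpha,\beta}\circ R_{\alpha,\beta}(y_1,x_2,y_2,x_1)$. The divergence is in $\gamma$, where the paper's proof is a single line: it follows directly from $\beta$ and the hypothesis, because $\beta$ exhibits $\mathcal{S}_{\alpha,\beta}$ as the conjugate of $R_{\alpha,\beta}\circ R_{\alpha,\beta}$ by the permutation $(x_1,y_1,x_2,y_2)\mapsto(y_1,x_2,y_2,x_1)$, the stated bracket is (as you correctly observe) just (\ref{extskl}) rewritten in those permuted variables, and a composition of Poisson maps is Poisson. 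Your additional machinery --- the monodromy matrix $T(\zeta)$, its Sklyanin relation via comultiplication, and especially ``uniqueness of refactorization'' --- is superfluous, and the last item is actually unavailable: the proposition does not assume $L$ is a \emph{strong} Lax matrix, so if your argument genuinely required it, that would be a gap. It does not: the ``main obstacle'' you flag (that after two applications of $R_{\alpha,\beta}$ the image variables reassemble into blocks of the same form with the same parameters) is already settled by the explicit formula in part $\beta$, which shows the primed variables occur in exactly the same permuted order as the unprimed input. Deleting the refactorization language from your $\gamma$ leaves precisely the paper's argument.
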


\begin{proof}
$\alpha$) From the Lax representation of the YB map $R_{\alpha,\beta}$ we have
\begin{equation} \label{prosyst}
L(u_1,y_2,\alpha)L(x_1,v_2,\beta)=L(y_1,y_2,\beta)L(x_1,x_2,\alpha).
\end{equation}
Also, from (\ref{Rsys}, \ref{sysF1F2}), by setting $x_1=t$,
$x_2=w_1$, $y_1=t_1$ and $y_2=w_{12}$, we derive $u_1=t_2$ and
$v_2=w_2$. If we substitute these values to (\ref{prosyst}),
 we derive the Lax representation (\ref{laxsys}).
\\
$\beta)$ We consider the $(1,1)$ periodic reduction of the system (\ref{sysF1F2}), with initial values $(x_1,y_1)$, $(x_2,y_2)$ as in Fig. \ref{F:system},
$\mathcal{S}_{\alpha,\beta}:((x_1,y_1),(x_2,y_2)) \mapsto ((x_1',y_1'),(x_2',y_2'))$. Then from (\ref{sysF1F2}) we have \\
$(y_1',x_2',y_2',x_1')=(F_1(y_1,y_2',x_2',\alpha,\beta),F_2(x_2,y_1,x_1,\alpha,\beta),F_1(y_2,y_1,x_1,\alpha,\beta),F_2(x_2,y_2',x_2',\alpha,\beta))$
which is equal to $R_{\alpha,\beta}  \circ R_{\alpha,\beta}(y_1,x_2,y_2,x_1).$ \\
$\gamma)$ It follows directly from $\alpha)$ and $\beta)$.
 \end{proof}

\begin{figure}
\begin{subfigure}{.5\textwidth}
\centertexdraw{\setunitscale 0.38 \linewd 0.03 \arrowheadtype t:F
\move (-1 -1) \lvec (-1 1) \lvec (1 1) \lvec (1. -1) \lvec(-1 -1)
\move(-1 -1) \fcir f:0.0 r:0.1 \move(-1 1) \fcir f:0.0 r:0.1 \move(1
1) \fcir f:0.0 r:0.1 \move(1 -1) \fcir f:0.0 r:0.1 \htext (-2.2
-1.2) {$(w,t)$} \htext (-2.4 1.) {$(w_2,t_2)$} \htext (1.2 1) {$(w_{12},t_{12})$}
\htext (1.2 -1.2) {$(w_{1},t_1)$} \htext (-0.05 1.2) {$\alpha$} \htext
(-0.05 -1.4) {$\alpha$} \htext (-1.5 -0.15) {$\beta$} \htext (1.2
-0.15) {$\beta$} }
\end{subfigure}
\begin{subfigure}{.5\textwidth}
\begin{center}
\begin{tikzpicture}[line cap=round,line join=round,>=triangle 45,x=1.5cm,y=1.5cm][h]
\draw (0.0,1.0)-- (1.0,0.0);
\draw (1.0,0.0)-- (2.0,1.0);
\draw (2.0,1.0)-- (3.0,0.0);
\draw (3.0,0.0)-- (4.0,1.0);

\draw (0.0,1.0)-- (1.0,2.0);
\draw (1.0,2.0)-- (2.0,1.0);
\draw (2.0,1.0)-- (3.0,2.0);
\draw (3.0,2.0)-- (4.0,1.0);

\draw (1.0,2.0)-- (2.0,3.0);
\draw (2.0,3.0)-- (3.0,2.0);
\draw [->] (1.0,0.0) -- (1.0,2.0);
\draw [->] (2.0,1.0) -- (2.0,3.0);

\draw [fill=black] (0.0,1.0) circle (2pt);
\draw(-0.2,1.2585663160937274) node {\scriptsize{$(x_1,y_1)$}};
\draw [fill=black] (1.0,0.0) circle (2pt);
\draw(1.0115929156639173,-0.21046306447702798) node {\scriptsize{$(x_2,y_2)$}};
\draw (0.39235541979040217,0.5043129347175777) node {$\alpha$};
\draw [fill=black] (2.0,1.0) circle (2pt);
\draw (2.0005444724900006,1.3085663160937274) node {\scriptsize{$(x_1,y_1)$}};
\draw (1.7109574955585132,0.4593605912254829) node {$\beta$};
\draw [fill=black] (3.0,0.0) circle (2pt);
\draw (3.1044801438134484,-0.21046306447702798) node {$x_2$};
\draw(2.3702585334425685,0.4593605912254829) node {$\alpha$};
\draw [fill=black] (4.0,1.0) circle (2pt);

\draw [fill=black] (1.0,2.0) circle (2pt);
\draw (0.81115929156639173,2.2525019874171747) node {\scriptsize{$(x_2',y_2')$}};
\draw (0.707021824235065,1.4632962625489305) node {$\beta$};
\draw (1.3663228621191206,1.4632962625489305) node {$\alpha$};
\draw [fill=black] (3.0,2.0) circle (2pt);
\draw (3.21044801438134484,2.2525019874171747) node{\scriptsize{$(x_2',y_2')$}};
\draw (2.6999090523845966,1.4632962625489305) node {$\beta$};
\draw(3.3741942047660167,1.4632962625489305) node {$\alpha$};

\draw [fill=black] (4.0,1.0) circle (2pt);
\draw(4.158415815136897,1.2585663160937274) node{\scriptsize{$(x_1,y_1)$}};
\draw (3.7038447237080443,0.4593605912254829) node {$\beta$};

\draw [fill=black] (2.0,3.0) circle (2pt);
\draw (2.0305444724900006,3.216437658740622) node {\scriptsize{$(x_1',y_1')$}};
\draw(1.7109574955585132,2.452247819375013) node {$\beta$};
\draw(2.3702585334425685,2.452247819375013) node {$\alpha$};

\end{tikzpicture}
\end{center}
\end{subfigure}
\caption{The system of equations (\ref{sysF1F2}) at the vertices of a quadrilateral and the $(1,1)$ periodic reduction \label{F:system}}
\end{figure}
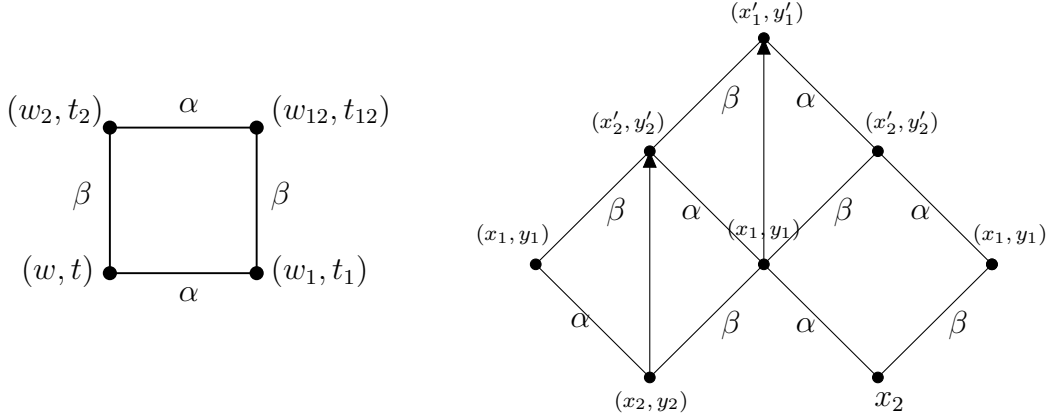

Equivalently, one can start by considering 3D consistent systems of the special form (\ref{sysF1F2}) and derive YB maps as
(\ref{Rsys}) and if the $(1,1)$ periodic reduction of these systems is Poisson, then the composition $R_{\alpha,\beta}  \circ R_{\alpha,\beta} $
will be Poisson too.

\subsubsection*{Multiparametric lattice NLS system}

We will apply the last results to a multiparametric form of the
lattice nonlinear Schr{\"o}dinger system (NLS) that is related to a
generalization of the Adler-Yamilov map.

We begin with the parametric map
 $R_{\alpha,\beta}^\varepsilon:(x_1,x_2,y_1,y_2) \mapsto (u_1,u_2,v_1,v_2)$, with
\begin{eqnarray}
{u}_{1} =\frac{\alpha_2}{\beta_2} y_{1}-%
\frac{\varepsilon^2 ( \alpha_1\beta_2-\alpha_2 \beta_{1})}{ \beta_2 (\alpha_2 \beta_2+ \varepsilon^2 x_1 y_2)}
x_{1},
 {u}_{2}=y_{2}, 
{v}_{1} =x_{1},   {v}%
_{2}=\frac{\beta_2}{\alpha_2}x_{2}+\frac{ \varepsilon^2 (\alpha_1\beta_2-\alpha_2
\beta_{1})}{\alpha_2(\alpha_2 \beta_2+  \varepsilon^2 x_1 y_2)} y_2 \label{gAYmap}
\end{eqnarray}
For any $\varepsilon$, this map is a YB map with 
vector parameters $\alpha=(\alpha_1,\alpha_2)$,
$\beta=(\beta_{1},\beta_2)$ and strong Lax matrix
\begin{equation}
\label{E:Lax_NLS}
{L}(x_{1},x_{2};\alpha)=
\begin{pmatrix}
\frac{\varepsilon}{\alpha_2}(\alpha_1+x_{1}x_{2})-\varepsilon \zeta  &
x_{1} \\
x_{2} & \frac{\alpha_2}{\varepsilon}
\end{pmatrix}.
\end{equation}
In this case equations (\ref{extskl}) are equivalent with the coordinate Poisson brackets
\begin{equation} \label{pbrAY}
\{ x_1,x_2 \}=\alpha_2, \ \{ y_1,y_2 \}=\beta_2, \ \{ x_i,y_j \}=0,
\end{equation}
and the map $R_{\alpha,\beta}$ is Poisson with respect to this bracket.
The corresponding functions $F_1$, $F_2$ are defined by the relations  $F_1(y_1,x_1,y_2,\alpha,\beta)=u_1$ and
$F_2(x_2,x_1,y_2,\alpha,\beta)=v_2$.

So, according to Proposition (\ref{propSys}), the system (\ref{sysF1F2})  satisfies the Lax
representation $L(t_2,w_{12},\alpha)L(t,w_2,\beta)=L(t_1,w_{12},\beta)L(t,w_1,\alpha)$. We can write this system as
 \begin{eqnarray}
&&\alpha_2^2 \beta_2 t_1-\beta_2^2  \alpha_2 t_2- \varepsilon^2 t((\beta_2 w_1- \alpha_2 w_2)t+\alpha_1 \beta_2-\alpha_2 \beta_{1}) = 0 , \nonumber \\
&&\beta_2^2  \alpha_2 w_1-\alpha_2^2  \beta_2 w_2 +\varepsilon^2 w_{12}(( \beta_2
w_1- \alpha_2 w_2)t+\alpha_1 \beta_2-\alpha_2 \beta_{1}) = 0.
\label{gsystNLS}
\end{eqnarray}
The $(1,1)$ periodic reduction of this system
$\mathcal{S}_{\alpha,\beta}((x_1,y_1),(x_2,y_2))=R_{\alpha,\beta}  \circ R_{\alpha,\beta}(y_1,x_2,y_2,x_1)$
 is a Poisson map with respect to
\begin{equation} \label{pbrAY2}
\{ y_1,x_2 \}=\alpha_2, \ \{ y_2,x_1 \}=\beta_2, \ \{ x_i,x_j \}=\{
y_i,y_j \}=\{ x_1,y_1 \}=\{ x_2,y_2 \}=0.
\end{equation}

The YB map (\ref{gAYmap}) is a special case\footnote{It is derived
by setting $\alpha_1=\beta_1=\varepsilon$ and renaming the parameters $\alpha_2$ and $\alpha_3$  into $\alpha_1$ and $\alpha_2$ respectively 
at the
map presented in \cite{kp3}.} of a generalisation of the
Adler-Yamilov map that was presented in \cite{kp3}. We have checked
that this lattice system is not 3D consistent. In fact, given
initial values $t,t_1,w_1,w_2,w_3$  and parameters
$(\alpha_1,\alpha_2), (\beta_1,\beta_2), (\gamma_1,\gamma_2)$ on the
cube, one has three ways of computing $w_{123}$ and gets different
values (two of the three values which are associated with $w_{13}$ are
the same). However, it still gives us the Lax pair given
by~\eqref{E:Lax_NLS}.

For $\varepsilon=\alpha_2=\beta_2=1$, the map (\ref{gAYmap}) is reduced to the
Adler-Yamilov YB map \cite{adlery} and the system (\ref{gsystNLS})
to the lattice NLS system \cite{Hereman}. In this case the
corresponding symplectic structures of the YB map and the $(1,1)$
reduction becomes $dx_1 \wedge dx_2+dy_1 \wedge dy_2$ and $dy_1
\wedge dx_2+dy_2 \wedge dx_1$ respectively.

\section{Integrability}

The Lax representation of quadrilateral equations and YB maps
provides integrals of periodic reductions. In the case of the lifts
of quad-graph equations as YB maps, the corresponding integrals are
derived from the trace of the monodromy matrix, 
\begin{equation} \label{mon1}
M_1(x_1,x_2,y_1,y_2)=L(y_1,y_2,\alpha)L(x_1,x_2,\beta),
\end{equation} 
while in the case of $(2,2)$ periodic reductions from the trace of
\begin{equation} \label{mon2}
M_2(x_1,x_2,x_3,x_4)=L(x_2,x_1,\beta)
L(x_3,x_2,\alpha)L(x_4,x_3,\beta)L(x_1,x_4,\alpha).
\end{equation}

If the Poisson structure is derived from the Sklyanin bracket, the
involutivity of the integrals follows directly 
\cite{bab}. In the rest of the cases we have to check it by computing their
brackets.  

The trace of the monodromy
matrix (\ref{mon2}) for  $H_1$ and $Q_1^{\delta=0}$ does not give us enough
integrals. Two additional integrals for these equations are
$I_{H_1}:=x_1+x_2+x_3+x_4$ and
$I_{Q_1^{\delta=0}}:=\frac{(x_1-x_2)(x_3-x_4)}{(x_2-x_3)(x_1-x_4)}$, respectively. 
In the case of the corresponding lifts, the trace of (\ref{mon1}) gives enough integrals 
for  $Q_1^{\delta=0}$ but only one for $H_1$. One extra integral for the lift of $H_1$ is 
$J_{H_1}:=(x_1+x_2-y_1-y_2)^2$.  Finally, all the necessary integrals of the multilinear 
cross-ratio equation and NLS system (as YB maps and periodic reductions) are derived from the trace 
of their monodromy matrices.

The integrals of all the examples that have been presented in this
paper are in involution with respect to the corresponding Poisson 
structures.

Higher dimensional maps can be derived by considering initial value
problems with greater periodicity. The YB maps that we studied here
as lifts of quadrilateral equations are four dimensional maps and
can be considered as the transfer maps of one periodic initial value
problem (see e.g. \cite{kp2}), with two fields on each edge of every
elementary quadrilateral. We can consider $n$-periodic initial value
problem to derive $4n$ dimensional transfer maps. By extending the
Poisson bracket to $\mathbb{C}^{4n}$, in the natural way, the
corresponding transfer maps will be Poisson. In this case the
integrals are derived from the trace of the monodromy matrix
$$M_n(x_1,\ldots,x_{2n},y_1, \ldots,
y_{2n})=L(y_{2n},y_{2n-1},\beta)L(x_{2n},x_{2n-1},\alpha) \ldots
L(x_2,x_1,\alpha)L(y_2,y_1,\beta).$$ As before, if the Poisson
structure is coming from the (extended) Sklyanin bracket, they will
be in involution.

On the other hand, in the case of $(n,n)$ periodic reductions of the
quad-graph equations with fields on the vertices we cannot extend the
corresponding Poisson bracket (\ref{ps22}) in a similar way and
derive Poisson maps. Therefore our analysis in this paper stops at
$(2,2)$ periodic reductions. There might be a way of extending the
presented Poisson structure to some kind of higher periodic
reductions and this is an issue that we would like to investigate in
the near future. However, it is noted that due to the symmetry of
Lagrangians of the ABS equations, one can obtain a
presymplectic structure for $(n,n)$ periodic reductions. This is a more
complicated structure that involves all the short legs and is given
as follows cf.\cite{ABS1}
\begin{equation*}
s(x_1,x_2,\beta) dx_1\wedge dx_2+s(x_2,x_3,\alpha) dx_2\wedge
dx_3+s(x_3,x_4,\beta)dx_3\wedge dx_4+\ldots +s(x_{2n},x_1,\alpha)
dx_1\wedge dx_2.
\end{equation*}
This form is degenerate in the cases of $H_1$ and $Q_1^{\delta=0}$ (remark \ref{rm3leg}).
A similar form can be derived for the staircase periodic reductions of the multiparametric cross-ratio equation (\ref{genQ1}).

\section{Conclusion}

We presented two different ways to obtain Poisson structures that
are preserved under the four dimensional maps derived as lifts and
periodic reductions of integrable lattice equations. The
corresponding integrals are in involution with respect to them.

We applied our results in the cases that can be viewed in the
following table. The check mark $\times$ refer to Poisson structures
derived from the Sklyanin bracket, while the \checkmark to
structures derived from three-leg forms. The last two columns refer to
multiparametric cross-ration equation and NLS system respectively. 
\\ \\ 
\begin{tabular}{ |p{2.9cm}||p{0.7cm}|p{0.7cm}|p{0.7cm} |p{0.97cm}|p{0.97cm}|p{0.7cm}|p{0.7cm}|p{0.7cm}|p{1.2cm}|p{1.35cm}|  }
 \hline
 \hline
\ \ & \ $H_1$ & \ $H_2$ & \ $H_3$ & \ $Q_1^{\delta=0}$ & \ $Q_1^{\delta=1}$ & \ $Q_2$ & \ $Q_3$ & \ $Q_4$ & \ m.c-r & \ m.NLS \\
 \hline
 Lifts (YB maps)  & $\times$\checkmark  & \ \ \checkmark &  \ \ \checkmark & $ \ \times$\checkmark  & \ \ \ \checkmark &  \ \ \checkmark  & \ \  \checkmark   & \ &   \ \ \ $\times$ & \ \ \ $\times$ \\
 Per. reductions & $\times$\checkmark & \ \ \checkmark &  \ \ \checkmark & $ \ \times$\checkmark & \ \ \ \checkmark & \  \ \checkmark  &\ \  \checkmark  & \ \checkmark  & \ &
\ \ \ $\times$
  \\
\hline \hline
\end{tabular}
\\ \\ 

Regarding the Sklyanin bracket, we have only been able to check a few of the Lax
representations of the ABS list. By considering different Lax pairs
it might be possible to convert more cases in this framework. In
the cases that we can apply both ways like in $H_1$ and
$Q_1^{\delta=0}$, we derive the same Poisson structure from the
Sklyanin bracket and from the corresponding three-leg form. This
fact suggests a deeper relation between these two different
approaches that we would like to investigate in the future.

Finally, we believe that the study of multiparametric extensions of
known integrable lattice equations and YB maps, as well as the
significance of the extra parameters on the continuum limits is an
interesting issue that deserves further attention.

\section*{Acknowledgement}
This research was supported by the Australian Research Council.  The authors would like to thank Prof Reinout Quispel for his useful comments.

\appendix
\section{\\Proof of Lemma \ref{lemma1}} \label{App:AppendixA}
Let
$\mathbf{S}_{\alpha,\beta}(x_1,x_2,y_1,y_2)=(U_1,U_2,V_1,V_2)$. From
(\ref{R}, \ref{S}, \ref{T}), we have
$$u_1=u_1'=V_1, \ u_2=u_2'=\tilde{u}_2=y_2, \ v_1=v_1'=\tilde{v}_1=x_1, \ v_2=\tilde{v}_2 \ \text{and} \ U_2=v_2'.$$
If $\mathbf{S}_{\alpha,\beta}$ is Poison map with respect to
$\pi_2$, then $\pi_2(dU_2,dV_1)=\pi_2(dv_2',du_1')=0$. From the last
equality and from (\ref{poisson2}) we derive (\ref{condition}).

On the other hand, let us suppose that (\ref{condition}) holds.
Since $R_{\alpha,\beta}$ is a Poisson map with respect to $\pi_1$,
$\pi_1(du_1,du_2)=J_{\alpha,\beta}(u_1,u_2)$ and
$\pi_1(dv_1,dv_2)=J_{\beta,\alpha}(v_1,v_2)$. From these relations, using  (\ref{poisson1}), we
derive
\begin{eqnarray}
\frac{\partial u'_1}{\partial{y_1}}J_{\beta,\alpha}(y_1,y_2)=\frac{\partial u_1}{\partial{y_1}}J_{\beta,\alpha}(y_1,y_2)= J_{\alpha,\beta}(u_1,u_2)=J_{\alpha,\beta}(u_1',y_2)=J_{\alpha,\beta}(u_1',u_2') \label{proof1} \\
\frac{\partial \tilde{v}_2}{\partial{x_2}}J_{\alpha,\beta}(x_1,x_2)=\frac{\partial {v}_2}{\partial{x_2}}J_{\alpha,\beta}(x_1,x_2)= J_{\beta,\alpha}(v_1,v_2)=J_{\beta,\alpha}(x_1,\tilde{v}_2)=J_{\beta,\alpha}(\tilde{v}_1,\tilde{v}_2) \label{proof2}
\end{eqnarray}
Equivalently, by interchanging the parameters $\alpha$ and $\beta$ in the last two equations, we derive from (\ref{D4sy})
\begin{eqnarray}
\frac{\partial \tilde{u}_1}{\partial{y_1}}J_{\alpha,\beta}(y_1,y_2)=J_{\beta,\alpha}(\tilde{u}_1,y_2)=J_{\beta,\alpha}(\tilde{u}_1,\tilde{u}_2), \label{proof3}  \\
\frac{\partial {v_2'}}{\partial{x_2}}J_{\beta,\alpha}(x_1,x_2)=J_{\alpha,\beta}(x_1,v_2')=J_{\alpha,\beta}(v_1',v_2'). \label{proof4}
\end{eqnarray}
So, from (\ref{proof3}) and (\ref{proof4}) we have that
\begin{equation} \label{proof5}
J_{\beta,\alpha}(U_1,U_2)= \frac{\partial U_1}{\partial v_1'} J_{\alpha,\beta}(v_1',v_2')=
\frac{\partial U_1}{\partial v_1'} \frac{\partial v_2'}{\partial x_2}J_{\alpha,\beta}(x_1,x_2).
\end{equation}
Similarly, from (\ref{proof2},\ref{proof1})
\begin{equation} \label{proof6}
J_{\beta,\alpha}(V_1,V_2)= \frac{\partial V_2}{\partial u_2'} J_{\alpha,\beta}(u_1',u_2')=
\frac{\partial V_2}{\partial u_2'} \frac{\partial u_1'}{\partial y_1}J_{\beta,\alpha}(y_1,y_2).
\end{equation}
Furthermore, condition (\ref{condition}) is equivalent to
\begin{equation} \label{condition2}
J_{\alpha,\beta}(x_1,x_2)\frac{\partial \tilde{u}_1}{\partial x_1}  \frac{\partial \tilde{v}_2}{\partial x_2}=J_{\alpha,\beta}(y_1,y_2)\frac{\partial \tilde{u}_1}{\partial y_1}  \frac{\partial \tilde{v}_2}{\partial y_2}.
\end{equation}

Next, we calculate the Poisson brackets  $\pi_2(dU_1,dU_2)$,  $\pi_2(dV_1,dV_2)$ and  $ \pi_2(dU_i,dV_j)$, $i,j=1,2$. From (\ref{poisson2}) 
we get 
\begin{eqnarray}
\pi_2(dU_1,dU_2) &=&J_{\beta,\alpha}(x_1,x_2)\frac{\partial{U_1}}{\partial{v_1'}}\frac{\partial {v_2'}}{\partial{x_2}}+\frac{\partial{U_1}}{\partial{u_1'}}(J_{\beta,\alpha}(x_1,x_2)\frac{\partial u_1'}{\partial x_1}  \frac{\partial v_2'}{\partial x_2}-J_{\beta,\alpha}(y_1,y_2)\frac{\partial u_1'}{\partial y_1}  \frac{\partial v_2'}{\partial y_2})  \nonumber \\
&=&J_{\beta,\alpha}(U_1,U_2)+\frac{\partial{U_1}}{\partial{u_1'}}(J_{\beta,\alpha}(x_1,x_2)\frac{\partial u_1'}{\partial x_1}  \frac{\partial v_2'}{\partial x_2}-J_{\beta,\alpha}(y_1,y_2)\frac{\partial u_1'}{\partial y_1}  \frac{\partial v_2'}{\partial y_2}), \label{proof7} \\
\pi_2(dV_1,dV_2) &=&-J_{\beta,\alpha}(y_1,y_2)\frac{\partial{V_2}}{\partial{u_2'}}\frac{\partial {u_1'}}{\partial{y_1}}+\frac{\partial{V_2}}{\partial{v_2'}}(J_{\beta,\alpha}(x_1,x_2)\frac{\partial u_1'}{\partial x_1}  \frac{\partial v_2'}{\partial x_2}-J_{\beta,\alpha}(y_1,y_2)\frac{\partial u_1'}{\partial y_1}  \frac{\partial v_2'}{\partial y_2})  \nonumber \\
&=&-J_{\beta,\alpha}(V_1,V_2)+\frac{\partial{V_2}}{\partial{u_2'}}(J_{\beta,\alpha}(x_1,x_2)\frac{\partial u_1'}{\partial x_1}  \frac{\partial v_2'}{\partial x_2}-J_{\beta,\alpha}(y_1,y_2)\frac{\partial u_1'}{\partial y_1}  \frac{\partial v_2'}{\partial y_2}), \label{proof8}
\end{eqnarray}
where for the last equalities of (\ref{proof7}) and (\ref{proof8}) we used (\ref{proof5}) and (\ref{proof6}) respectively.
Also,
\begin{eqnarray} \label{proof9}
\pi_2(dU_1,dV_1) &=&-\frac{\partial{U_1}}{\partial{v_2'}}(J_{\beta,\alpha}(x_1,x_2)\frac{\partial u_1'}{\partial x_1}  \frac{\partial v_2'}{\partial x_2}-J_{\beta,\alpha}(y_1,y_2)\frac{\partial u_1'}{\partial y_1}  \frac{\partial v_2'}{\partial y_2}), \\
\pi_2(dU_2,dV_2) &=&-\frac{\partial{V_2}}{\partial{u_1'}}(J_{\beta,\alpha}(x_1,x_2)\frac{\partial u_1'}{\partial x_1}  \frac{\partial v_2'}{\partial x_2}-J_{\beta,\alpha}(y_1,y_2)\frac{\partial u_1'}{\partial y_1}  \frac{\partial v_2'}{\partial y_2}),  \label{proof10} \\
\pi_2(dU_2,dV_1) &=&-(J_{\beta,\alpha}(x_1,x_2)\frac{\partial u_1'}{\partial x_1}  \frac{\partial v_2'}{\partial x_2}-J_{\beta,\alpha}(y_1,y_2)\frac{\partial u_1'}{\partial y_1}  \frac{\partial v_2'}{\partial y_2}) \label{proof11} \end{eqnarray}
and
\begin{eqnarray*}
\pi_2(dU_1,dV_2) &=&(\frac{\partial{U_1}}{\partial{u_1'}}\frac{\partial{V_2}}{\partial{v_2'}}-\frac{\partial{U_1}}{\partial{v_2'}}\frac{\partial{V_2}}{\partial{u_1'}})(J_{\beta,\alpha}(x_1,x_2)\frac{\partial u_1'}{\partial x_1}  \frac{\partial v_2'}{\partial x_2}-J_{\beta,\alpha}(y_1,y_2)\frac{\partial u_1'}{\partial y_1}  \frac{\partial v_2'}{\partial y_2}) \nonumber \\
&+&\frac{\partial v_2'}{\partial x_2} J_{\beta,\alpha}(x_1,x_2)\frac{\partial U_1}{\partial v_1'}  \frac{\partial V_2}{\partial v_2'}-\frac{\partial u_1'}{\partial y_1}J_{\beta,\alpha}(y_1,y_2)\frac{\partial U_1}{\partial u_1'}  \frac{\partial V_2}{\partial u_2'}
\end{eqnarray*}
or from (\ref{proof1},\ref{proof4})
\begin{eqnarray}
\pi_2(dU_1,dV_2) &=&(\frac{\partial{U_1}}{\partial{u_1'}}\frac{\partial{V_2}}{\partial{v_2'}}-\frac{\partial{U_1}}{\partial{v_2'}}\frac{\partial{V_2}}{\partial{u_1'}})(J_{\beta,\alpha}(x_1,x_2)\frac{\partial u_1'}{\partial x_1}  \frac{\partial v_2'}{\partial x_2}-J_{\beta,\alpha}(y_1,y_2)\frac{\partial u_1'}{\partial y_1}  \frac{\partial v_2'}{\partial y_2}) \nonumber \\
&+& J_{\alpha,\beta}(v_1',v_2')\frac{\partial U_1}{\partial v_1'}  \frac{\partial V_2}{\partial v_2'}-J_{\alpha,\beta}(u_1',u_2')\frac{\partial U_1}{\partial u_1'}  \frac{\partial V_2}{\partial u_2'}. \label{proof12}
\end{eqnarray}

From conditions (\ref{condition}) and (\ref{condition2}), (\ref{proof7}, \ref{proof8}, \ref{proof9}, \ref{proof10}, \ref{proof11}, \ref{proof12})
become
$$\pi_2(dU_1,dU_2)=J_{\beta,\alpha}(U_1,U_2), \ \pi_2(dV_1,dV_2)=-J_{\beta,\alpha}(V_1,V_2), \   \pi_2(dU_i,dV_j)=0, \  \text{for} \ i,j=1,2,$$
i.e. the map $\mathbf{S}_{\alpha,\beta}$ is Poisson.

\end{document}